\newcommand{\abs}[1]{\left\vert #1\right\vert}
\newcommand{\B}{\mathcal{B}}
\newcommand{\bvarphi}{\boldsymbol{\varphi}}
\newcommand{\bphi}{\boldsymbol{\phi}}
\newcommand{\C}{\mathcal{C}}
\newcommand{\deltac}{\delta^c_{1j}}
\newcommand{\ds}[1]{\displaystyle{#1}}
\newcommand{\f}{\mathbf{f}}
\newcommand{\F}{\mathbf{F}}
\newcommand{\scrF}{\mathscr{F}}
\newcommand{\g}{\mathbf{g}}
\newcommand{\G}{\mathbf{G}}
\newcommand{\scrG}{\mathscr{G}}
\newcommand{\ind}{\mathbbm{1}}
\newcommand{\kk}{\mathsf{k}}
\newcommand{\scrL}{\mathscr{L}}
\newcommand{\Lip}{\operatorname{Lip}}
\newcommand{\n}{\mathsf{n}}
\newcommand{\norm}[1]{\Vert #1\Vert}
\newcommand{\R}{\mathbb{R}}
\newcommand{\Rn}{\R^n}
\renewcommand{\u}{\mathbf{u}}
\newcommand{\U}{\mathbf{U}}
\newcommand{\unit}[1]{\textup{#1}}
\theoremstyle{plain}\newtheorem{theorem}{Theorem}[section]
\theoremstyle{remark}\newtheorem{remark}[theorem]{Remark}
\theoremstyle{plain}\newtheorem{corollary}[theorem]{Corollary}
\theoremstyle{definition}\newtheorem{definition}[theorem]{Definition}
\newenvironment{tightcenter}{\setlength\topsep{5pt}\setlength\parskip{-5pt}\begin{center}}{\end{center}}
\newenvironment*{trafficrule}[2][\unskip]{\begin{framed}\begin{tightcenter}\textsc{\textbf{#2 rule} \textnormal{#1}}\end{tightcenter}\it}{\end{framed}}
\title{A fully-discrete-state kinetic theory approach to traffic flow on road networks}
\author{Luisa Fermo\thanks{This author was partially supported by INdAM-GNCS (National Group for Scientific Computing of the National Institute for Advanced Mathematics, Italy).} \\
		{\small\it Department of Mathematics and Computer Science} \\
		{\small\it University of Cagliari} \\
		{\small\it Viale Merello 92, 09123 Cagliari, Italy} \\[5mm]
		Andrea Tosin \\
		{\small\it Istituto per le Applicazioni del Calcolo ``M. Picone''} \\
		{\small\it Consiglio Nazionale delle Ricerche} \\
		{\small\it Via dei Taurini 19, 00185 Rome, Italy}
	   }
\date{}
\begin{document}

\maketitle

\begin{abstract}
This paper presents a new approach to the modeling of vehicular traffic flows on road networks based on kinetic equations. While in the literature the problem has been extensively studied by means of macroscopic hydrodynamic models, to date there are still not, to the authors' knowledge, contributions tackling it from a genuine statistical mechanics point of view. Probably one of the reasons is the higher technical complexity of kinetic traffic models, further increased in case of several interconnected roads. Here such difficulties of the theory are overcome by taking advantage of a discrete structure of the space of microscopic states of the vehicles, which is also significant in view of including the intrinsic microscopic granularity of the system in the mesoscopic representation.

\medskip

\noindent{\bf Keywords:} kinetic equations, traffic granularity, flows on networks, junction conditions

\medskip

\noindent{\bf Mathematics Subject Classification:} 34A34, 34C60, 35L65, 35Q20, 90B20
\end{abstract}

\section{Introduction}
Models of flows on networks are a consolidated and nonetheless continuously in ferment field of mathematical research from both theoretical and applied points of view, as the survey~\cite{bressan2014EMSSMS} demonstrates. In case of vehicular traffic, they bring single-road models to the next level, yielding a mathematical description able to cope with realistic scenarios involving many interconnected roads. This has relevant implications in terms of fruitful feedbacks between mathematical modeling and engineering applications toward the common goal of traffic forecast, see e.g.,~\cite{cristiani2010CAIM}.

A nowadays well-established theory of traffic flow on networks, whose milestone is the book~\cite{garavello2006BOOK}, uses partial differential equations, specifically conservation laws, at the macroscopic scale. On each edge, viz. road, of the network traffic is described with a hydrodynamic approach using average quantities: the car density $\rho$, i.e., the number of cars per kilometer, the mean speed $u$, and the flux $q=\rho u$. Road connections, viz. junctions, are treated as special boundary conditions for the conservation laws to be solved on the edges. In particular, transmission conditions among roads meeting at a junction have to be handled in such a way that the mass of cars flowing across the junction is conserved and that congestions (queues) possibly propagate backward, from outgoing to incoming roads, against the stream of vehicles. On the whole, the mathematical theory is quite complete and solid. Well-posedness of the problem as well as controllability and optimization of networked flows have been successfully addressed, see again~\cite{bressan2014EMSSMS,garavello2006BOOK} and references therein.

Macroscopic models give a synthetic view of the spatiotemporal evolution of traffic, focusing directly on group dynamics rather than on those of single vehicles. This makes them appealing in the context of (large) networks, where the main interest is normally in average traffic distributions. Smaller details of single cars would be indeed most of the times useless and, even worst, they would likely force one to deal with quite high numbers of unknown parameters. On the other hand, finer characterizations of the complexity of vehicle dynamics, for instance one-to-one interactions among cars responsible for speed variations, which pertain to a lower microscopic scale, are lost in this kind of description. An alternative to macroscopic models, closer to the microscopic states of the vehicles but still profiting from an aggregate description of traffic as opposed to an agent-based one, is a statistical approach by kinetic equations.

Kinetic modeling of vehicular traffic was initiated in~\cite{prigogine1961PROC}, subsequently refined in~\cite{paveri1975TR}, and then systematically developed in, among others, the series of papers~\cite{gunther2002MCM,gunther2003SIAP,klar1997JSP,wegener1996TTSP}. Readers are also referred to the surveys~\cite{bellomo2011SIREV,piccoli2009ENCYCLOPEDIA} for additional bibliographical details. These works were primarily concerned with either a Boltzmann or an Enskog-type treatment of vehicular flow inspired by analogies with classical gas dynamics. More recently new contributions have appeared~\cite{bellouquid2012M3AS,coscia2007IJNM,delitala2007M3AS,fermo2013SIAP,fermo2014DCDSS,tosin2009AML}, which use the \emph{generalized kinetic and stochastic game theory of active particles}~\cite{bellomo2013M3AS} together with discrete microscopic states of the vehicles. This way they successfully capture both the implicit stochasticity of human behaviors, hence ultimately of microscopic car interactions, and the intrinsic microscopic granularity of the distribution of vehicles along a road, which indeed do not properly form a continuum even in congested situations. All cited models are however confined to the treatment of traffic on single roads. Concerning networks, it is worth reporting that in~\cite{herty2009CMAM} a hybrid approach is used, which couples a macroscopic description of traffic on every road with a kinetic computation of the mass flow across the junctions. Nevertheless, at present a theory of car flow on networks genuinely grounded on a kinetic basis is, to the authors' knowledge, still missing.

The purpose of this paper is to contribute to filling such a gap by taking advantage of the fully-discrete-state kinetic model for single roads introduced in~\cite{fermo2013SIAP}, whose hallmarks are now briefly recalled for the sake of completeness.

As usual, the main dependent variable of the kinetic representation is the one-particle distribution function, say $f^r$ if it refers to a generic road indexed by $r$, over the microscopic state of the vehicles. The latter is classically given by the pair $(x,\,v)$, $x$ denoting the spatial position and $v$ the speed. On one-way roads traffic is described as one-dimensional, hence $x$, $v$ are scalar variables. Moreover, since it is customary to use dimensionless variables, throughout the paper space and speed will be assumed to be nondimensional referred to characteristic values $\ell,\,V>0$, respectively. A discrete representation of the space of microscopic states is obtained by introducing:
\begin{itemize}
\item A pairwise disjoint partition of the road in, say, $m_r\geq 1$ road cells denoted by $I^r_i$, which are assumed to have unit length, $\abs{I^r_i}=1$, for all $i,\,r$. Recalling the aforesaid nondimensionalization of the space, this implies that the characteristic length $\ell$ is the dimensional size of the cells, thereby defining the measure of the spatial granularity of the kinetic representation. Due to this argument, a conceivable reference value is $\ell=O(5~\unit{m})$, i.e., the order of magnitude of the average length of a compact car. \\
If road $r$ is identified with the interval $(0,\,m_r)\subset\R$ (i.e., its dimensionless length equals the number of cells of the partition), then the partition $\{I^r_i\}_{i=1}^{m_r}$ is, on the whole, such that:
$$ I^r_{i_1}\cap I^r_{i_2}=\emptyset\qquad\forall\,i_1\ne i_2, \quad\qquad \bigcup_{i=1}^{m_r}I^r_i=(0,\,m_r). $$
A road cell is the minimal space unit for vehicle localization. Because of the uncertainty due to the microscopic spatial granularity of the vehicle distribution, only the number of vehicles is known on average within each cell but not their pointwise location.
\item A speed lattice formed by, say, $n\geq 2$ distinct speed classes $v_j$ such that
$$ 0=v_1<\dots<v_j<v_{j+1}<\dots<v_n=1. $$
For instance, if the lattice is uniformly spaced then
\begin{equation}
	v_j=\frac{j-1}{n-1}, \quad j=1,\,\dots,\,n.
	\label{eq:vj_uniform}
\end{equation}
No speed is negative because vehicular traffic is one-directional. Moreover, with this choice of the dimensionless speed lattice the reference value $V$ turns out to be the maximum dimensional speed attainable by vehicles, or alternatively the maximum one allowed. Therefore it makes sense to assume $V=O(50~\unit{km/h})$, thinking of city roads.
\end{itemize}

Consequently the kinetic distribution function is given in the form:
$$ f^r(t,\,x,\,v)=\sum_{i=1}^{m_r}\sum_{j=1}^{n}f^r_{ij}(t)\ind_{I^r_i}(x)\delta_{v_j}(v), $$
where $\ind$ denotes the characteristic function ($\ind_{I^r_i}(x)=1$ if $x\in I^r_i$, $0$ otherwise) and $\delta$ the Dirac distribution. The coefficient $f^r_{ij}(t)$ is the density of vehicles which, at time $t>0$, travel in the $i$th cell of road $r$ with speed $v_j$. Coherently with the nondimensionalization introduced above, also the time variable has to be understood as dimensionless, in particular referred to the characteristic time $T=\ell/V=O(0.36~\unit{s})$. Likewise, the density $f^r_{ij}$ is referred to the maximum number of cars, say $N_\textup{max}=O(1)$, per unit length that can be accommodated in one road cell: $N_\textup{max}/\ell=O(200~\unit{car/km})$. Hence, in particular, $0\leq f^r_{ij}\leq 1$ for all $i,\,j,\,r$.

The macroscopic car density and flux can be computed, as usual in statistical mechanics, as zeroth and first order moments, respectively, of the distribution function with respect to $v$:
$$ \rho^r(t,\,x)=\int_0^1 f^r(t,\,x,\,dv)=\sum_{i=1}^{m_r}\left(\sum_{j=1}^{n}f^r_{ij}(t)\right)
	\ind_{I^r_i}(x)=\sum_{i=1}^{m_r}\rho^r_i(t)\ind_{I^r_i}(x), $$
where
\begin{equation}
	\rho^r_i:=\sum_{j=1}^{n}f^r_{ij}
	\label{eq:rhocell}
\end{equation}
is the car density in the cell $I^r_i$ referred to the characteristic value $N_\textup{max}/\ell$ (hence $0\leq\rho^r_i\leq 1$);
$$ q^r(t,\,x)=\int_0^1 vf^r(t,\,x,\,dv)=\sum_{i=1}^{m_r}\left(\sum_{j=1}^{n}v_jf^r_{ij}(t)\right)
	\ind_{I^r_i}(x)=\sum_{i=1}^{m_r}q^r_i(t)\ind_{I^r_i}(x), $$
where
\begin{equation}
	q^r_i:=\sum_{j=1}^{n}v_jf^r_{ij}
	\label{eq:qcell}
\end{equation}
is the flux crossing the cell $I^r_i$ referred to the characteristic value $N_\textup{max}V/\ell=O(10^4~\unit{car/h})$.

In order to substantiate a mathematical model, in~\cite{fermo2013SIAP} the following evolution equations for the $f^r_{ij}$'s are derived:
\begin{equation}
	\frac{df^r_{ij}}{dt}+v_j(\Phi(\rho^r_i,\,\rho^r_{i+1})f^r_{ij}-\Phi(\rho^r_{i-1},\,\rho^r_i)f^r_{i-1,j})
		=G_j[\f^r_i,\,\f^r_i]-f^r_{ij}L[\f^r],
	\label{eq:single.road.eq}
\end{equation}
$i=1,\,\dots,\,m_r$, $j=1,\,\dots,\,n$, where $\f^r_i:=\{f^r_{ij}\}_{j=1}^{n}$.

The left-hand side expresses the time variation of the number of cars in the cell $I^r_i$ and in the $j$th speed class due to the transport at their own speed $v_j$. The function $\Phi$ is called \emph{flux limiter} because it limits the free fluxes of vehicles traveling across adjacent cells, such as $I^r_{i-1}$, $I^r_i$, $I^r_{i+1}$, on the basis of the occupancy of the origin cell and the free room available in the destination cell. A prototypical form of $\Phi$ is:
\begin{equation}
	\Phi(\rho^r_i,\,\rho^r_{i+1})=\frac{\min\{\rho^r_i,\,1-\rho^r_{i+1}\}}{\rho^r_i},
		\qquad \Phi(0,\,\cdot)=1.
	\label{eq:fluxlim}
\end{equation}
In the sequel we will often use the shorthand notation $\Phi^r_{i,i+1}$ for $\Phi(\rho^r_i,\,\rho^r_{i+1})$.

The right-hand side expresses short-range conservative interactions among vehicles within the cell $I^r_i$, which can produce speed variations due to acceleration or braking. According to the already mentioned generalized kinetic theory of active particles such interactions are described as \emph{stochastic games} between pairs of players, viz. vehicles. The pre-interaction strategies of the interacting vehicles are their speeds at the moment that they meet, while the payoff of the game are their post-interaction speeds. Such payoffs are assumed to be known only in probability, whereby the intrinsic randomness of driver behaviors is taken into account in Eq.~\eqref{eq:single.road.eq}. The overall effect of the games is described by the \emph{gain} and \emph{loss operators}, respectively:
\begin{equation}
	G_j[\f^r_i,\,\f^r_i]=\eta_0\rho_i^r\sum_{h,\,k=1}^{n}A_{hk}^{j}[\rho^r_i]f^r_{ih}f^r_{ik}, \qquad
		L[\f^r_i]=\eta_0\rho^r_i\sum_{k=1}^{n}f^r_{ik}=\eta_0\left(\rho^r_i\right)^2,
	\label{eq:GL}
\end{equation}
which count statistically the number of interactions leading vehicles in the $i$th road cell to gain or lose, in the unit time, the speed $v_j$.

In Eq.~\eqref{eq:GL}, $\eta_0>0$ is a basic interaction frequency whereas the $A_{hk}^{j}[\rho^r_i]$'s are speed transition probabilities:
$$ A_{hk}^{j}[\rho^r_i]=\operatorname{Prob}(v_h\to v_j\vert v_h,\,v_k,\,\rho^r_i). $$
In practice, for fixed $h,\,k$, $\{A_{hk}^{j}[\rho^r_i]\}_{j=1}^{n}$ is the probability distribution of the payoff of an interaction between a vehicle traveling at speed $v_h$ and one traveling at speed $v_k$, further parameterized by the car density in the $i$th road cell, cf. Eq.~\eqref{eq:rhocell}. This way the influence of local traffic conditions on driver choices is taken into account. The whole set of probabilities $\{A_{hk}^{j}[\rho^r_i]\}_{h,\,k,\,j=1}^{n}$ forms the so-called \emph{table of games}. For its detailed expression readers are referred to~\cite{fermo2013SIAP}. A basic property of the table of games is that 
\begin{equation}
	\sum_{j=1}^{n}A_{hk}^{j}[\rho^r_i]=1, \qquad 
		\forall\,h,\,k=1,\,\dots,\,n, \quad \forall\,\rho^r_i\in [0,\,1],
	\label{eq:tog.sum.1}
\end{equation} 
whence using Eq.~\eqref{eq:GL} it results:
\begin{equation}
	\sum_{j=1}^{n}(G_j[\f^r_i,\,\f^r_i]-f^r_{ij}L[\f^r_i])=0
		\quad \forall\ i=1,\,\dots,\,m_r.
	\label{eq:conservativeness}
\end{equation}
This implies that, at a macroscopic level, Eq.~\eqref{eq:single.road.eq} entails the conservation of the car mass.

Model~\eqref{eq:single.road.eq} has some analogies with the Cell Transmission Model (CTM)~\cite{daganzo1994TRB,daganzo1995TRB}, especially as far as the space discretization in road cells is concerned. CTM was introduced in the engineering literature mainly as a computational simplification of macroscopic conservation-law-based models. Likewise, a by-product of model~\eqref{eq:single.road.eq}, particularly welcome in view of its application to networks, is that it reduces the technical complexity of standard kinetic models while being consistent with statistical mechanical theories of vehicular traffic.

As already anticipated, this paper studies how model~\eqref{eq:single.road.eq} can be profitably employed to describe traffic flows on a network of $R$ interconnected roads ($r=1,\,\dots,\,R$). The main issue of the theory is the modeling of transmission conditions at junctions, where the incoming roads merge and flows are redistributed among the outgoing roads. In Section~\ref{sect:modeling} the two most basic types of junctions are considered, which serve as building blocks for all other more complex cases: a junction with one incoming and two outgoing roads, which introduces the concept of flow redistribution, and the dual junction with two incoming and one outgoing road, which covers the case of road merging and introduces the concept of right-of-way. Then in Section~\ref{sect:numerics} selected exploratory examples show that representative networks, for instance traffic circles, can be studied relying on only these two basic junctions. Finally, in Section~\ref{sect:analysis} the picture is completed by a well-posedness theory of the initial/boundary-value problems across such junctions.

\section{Kinetic modeling of networks}
\label{sect:modeling}
\subsection{Basic concepts}
\label{sect:basic}
A road network can be represented as an oriented graph in the plane, whose edges and vertexes are the roads and the junctions, respectively, see Fig.~\ref{fig:gennet}. From now on, with a slight abuse of speech, we will often confuse the words edge, road and vertex, junction. The orientation of an edge indicates the direction in which vehicles flow along the corresponding road, assuming one-dimensional unidirectional traffic. Two-way roads are possibly represented by two distinct edges connecting the same vertexes but having opposite orientations. Actually not all vertexes of the graph correspond strictly to junctions in the network. If in a given vertex there are both incoming and outgoing edges then that vertex is a true junction (black bullets in Fig.~\ref{fig:gennet}). If, conversely, there are only incoming or outgoing edges then that vertex is actually a peripheral one, namely either an exit or an access point for the modeled portion of the network (white squares in Fig.~\ref{fig:gennet}). 
\begin{figure}[!t]
\includegraphics[width=\textwidth]{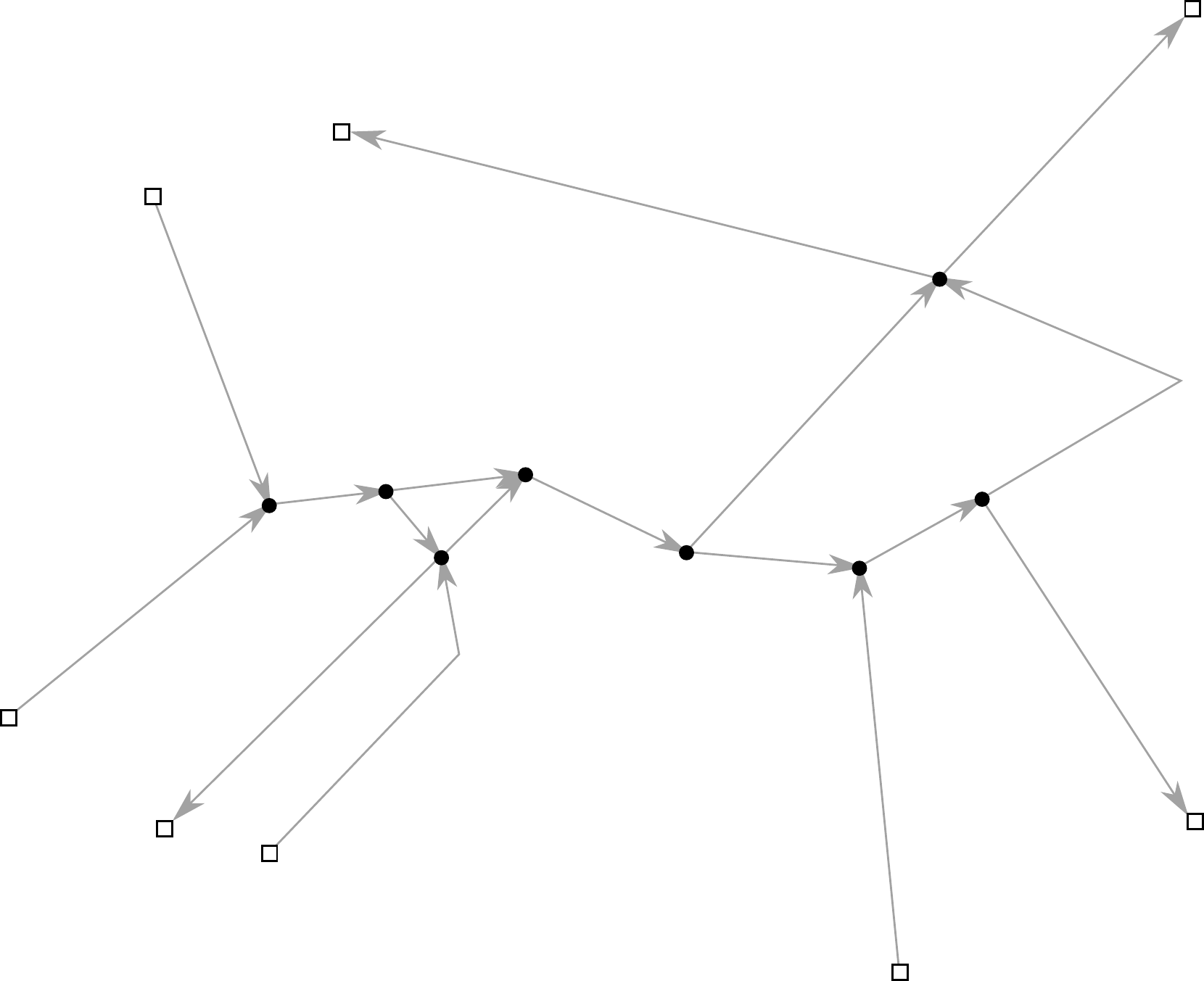}
\caption{Representation of a generic road network as an oriented graph}
\label{fig:gennet}
\end{figure}

The car flow on the $r$th edge is modeled by means of Eq.~\eqref{eq:single.road.eq} using a convenient number $m_r$ of space cells according to the actual length of the corresponding road. At the endpoints of the edge boundary conditions have to be prescribed. In fact, in the first cell ($i=1$) Eq.~\eqref{eq:single.road.eq} requires $\{f^r_{0j}\}_{j=1}^{n}$ (the flux limiter $\Phi^r_{0,1}$ is then univocally determined by Eq.~\eqref{eq:fluxlim}) while in the last cell ($i=m_r$) it requires $\Phi^r_{m_r,\,m_r+1}$. We incidentally notice that the ``first'' and ``last'' cells are identified on the basis of the orientation of the edge under consideration. Determining such boundary conditions in the proper way amounts to integrating the single-road traffic model~\eqref{eq:single.road.eq} with flow dynamics at junctions.

At peripheral vertexes only one of the aforesaid two types of boundary conditions is needed. More precisely, at an access point one assigns the distribution functions $\{f^r_{0j}\}_{j=1}^{n}$ to model the upstream flow of vehicles coming from a part of the network not explicitly modeled and entering road $r$. At an exit point one assigns instead the flux limiter $\Phi^r_{m_r,m_r+1}$ to model traffic conditions downstream, which may allow or prevent vehicles from leaving road $r$ at their free flux $v_jf^r_{m_rj}$. Conversely, at an actual junction the previous two types of boundary conditions apply simultaneously and take more properly the form of \emph{transmission conditions}. Outgoing roads require the distribution functions $\{f^r_{0j}\}_{j=1}^{n}$, which are determined by the outflow of vehicles from incoming roads. On the other hand, incoming roads require the flux limiter $\Phi^r_{m_r,m_r+1}$, which depends on the inflow of vehicles that outgoing roads can sustain based on their current traffic loads. Such a mutual exchange of information between incoming and outgoing roads is subject to the \emph{mass conservation constraint}: the total incoming flux must equal the total outgoing one, so that no cars come out or remain trapped in the junction.

It is worth stressing a relevant difference between the kinetic and the macroscopic frameworks, which has consequences on the way in which transmission conditions are managed. Since in kinetic equations the transport speed coincides with the microscopic speed of cars, which is nonnegative by definition, the characteristic lines keep always the same direction over time. Thus, once the orientation of the road has been fixed, inflow and outflow boundaries invariably coincide with the first and last endpoint of the road, respectively. Consequently, the relevant information about the structure of the network is fully topological and can be summarized in a $J\times R$ incidence matrix, say $\mathbb{I}$, where $J$ is the total number of junctions of the network. We recall for completeness that the generic entry $\mathbb{I}_{sr}$ of such a matrix is defined as:
$$ \mathbb{I}_{sr}=
	\begin{cases}
		-1 & \text{if road\ } r\ \text{enters junction\ } s\ \text{(incoming road)} \\
		1 & \text{if road\ } r\ \text{leaves junction\ } s\ \text{(outgoing road)} \\
		0 & \text{if road\ } r\ \text{and junction\ } s\ \text{are not incident}
	\end{cases}
$$
or possibly with the opposite convention of sign. Conversely, in macroscopic equations the characteristic speeds do not coincide, in general, with the speed of the vehicles. Hence inflow and outflow boundaries can appear, disappear, or swap over at different times depending on whether the characteristic speed is instantaneously positive or negative in the neighborhood of each endpoint of the road. Consequently, unlike the kinetic case, inflow vs. outflow transmission conditions at a junction cannot be univocally associated to the outgoing vs. incoming character of the roads hitting that junction, which entails additional technicalities to be handled.

\subsection{Basic types of junctions}
In this section we discuss how the concepts set forth above, together with a few other case-dependent rules, constitute the basis for modeling the two most fundamental types of junctions mentioned in the Introduction, namely:
\begin{itemize}
\item a junction with one incoming and two outgoing roads, that we will henceforth refer to briefly as \emph{1-2 junction};
\item a junction with two incoming and one outgoing road, that we will henceforth refer to briefly as \emph{2-1 junction}.
\end{itemize}
Notice incidentally that even quite complex networks, such as the one represented in Fig.~\ref{fig:gennet}, can in most cases be almost entirely reduced to a combination of only these two types of junctions.

Before tackling them, however, we preliminarily speculate on a fictitious junction formed by only one incoming and one outgoing road (\emph{1-1 junction}). Clearly it is irrelevant for applications but is useful to fix a few ideas that we will rely upon in the subsequent theory.

\subsubsection{1-1 junction: The fictitious junction}
\label{sect:1-1_junction}
\begin{figure}[!t]
\centering
\includegraphics[width=0.6\textwidth]{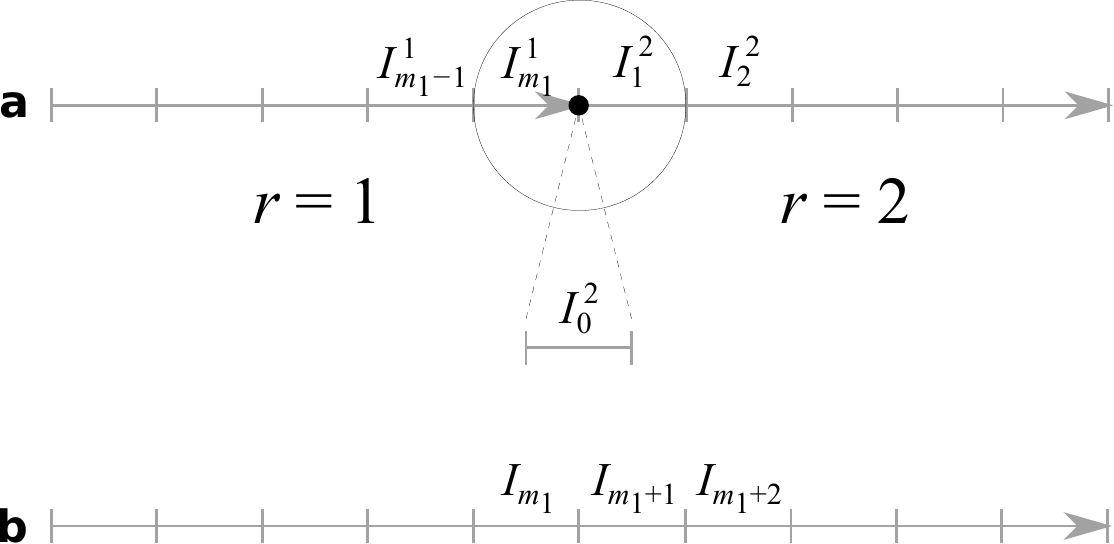}
\caption{(a) A 1-1 junction; (b) the corresponding interpretation as a single road}
\label{fig:1-1_junction}
\end{figure}
A 1-1 junction allows us to exploit the topological equivalence with a single road without junction, cf. Fig.~\ref{fig:1-1_junction}, for grasping some general ideas on how to handle transmission conditions in a kinetic model.

Let $r=1$ be the incoming road and $r=2$ the outgoing road. Vehicles are conserved through the junction if the net time variation of the total mass within the circle in Fig.~\ref{fig:1-1_junction}a is due only to inflow from the cell $I^1_{m_1-1}$ and outflow from the cell $I^2_1$. In order to impose such a constraint we consider Eq.~\eqref{eq:single.road.eq} for each of the road cells encompassed by the circle, i.e., $i=m_1$ for $r=1$ and $i=1$ for $r=2$. First we sum over $j=1,\,\dots,\,n$ taking Eq.~\eqref{eq:conservativeness} into account:
$$
\begin{cases}
	\dfrac{d\rho^1_{m_1}}{dt}+\Phi^1_{m_1,m_1+1}q^1_{m_1}-\Phi^1_{m_1-1,m_1}q^1_{m_1-1}=0 \\[3mm]
	\dfrac{d\rho^2_1}{dt}+\Phi^2_{1,2}q^2_1-\Phi^2_{0,1}q^2_0=0,
\end{cases}
$$
then we further sum term by term to find
\begin{align*}
	\frac{d}{dt}\left(\rho^1_{m_1}+\rho^2_1\right) &= \Phi^1_{m_1-1,m_1}q^1_{m_1-1}-\Phi^2_{1,2}q^2_1 \\
	& \phantom{=} +\left(\Phi^2_{0,1}q^2_0-\Phi^1_{m_1,m_1+1}q^1_{m_1}\right).
\end{align*}
By inspecting the right-hand side we recognize that the quantity in brackets should vanish:
\begin{equation}
	\Phi^1_{m_1,m_1+1}q^1_{m_1}=\Phi^2_{0,1}q^2_0,
	\label{eq:masscons_1-1}
\end{equation}
which, thinking of the fluxes in terms of the distribution functions, cf. Eq.~\eqref{eq:qcell}, yields a constraint on $\Phi^1_{m_1,m_1+1}$, $\{f^2_{0j}\}_{j=1}^{n}$ to be imposed at the junction.

Equation~\eqref{eq:masscons_1-1} alone is not sufficient to fully determine the transmission conditions as it involves $n+1$ unknown quantities. On the other hand, it is plain that, for conservation purposes, the entire flux reaching the end of road $r=1$ must be put at the entrance of road $r=2$, namely in the ghost cell $I^2_0$ (see Fig.~\ref{fig:1-1_junction}a). Hence we complement Eq.~\eqref{eq:masscons_1-1} with the rule
\begin{equation}
	q^2_0=q^1_{m_1},
	\label{eq:fluxdistr_1-1}
\end{equation}
which, considering that Eq.~\eqref{eq:masscons_1-1} is required to hold for all possible values of the incoming flux $q^1_{m_1}$, gives
\begin{equation}
	\Phi^1_{m_1,m_1+1}=\Phi^2_{0,1}.
	\label{eq:Phi_1-1}
\end{equation}

In addition, from Eq.~\eqref{eq:fluxdistr_1-1} we get the relationship
\begin{equation}
	\sum_{j=1}^{n}v_j\left(f^2_{0j}-f^1_{m_1j}\right)=0,
	\label{eq:fluxdistr_1-1_micro}
\end{equation}
which we need to take into account when assigning the distribution functions $\{f^2_{0j}\}_{j=1}^{n}$ at the junction. Notice that there is not, in principle, a unique way to choose the $f^2_{0j}$'s so as to fulfill Eq.~\eqref{eq:fluxdistr_1-1_micro}. This is because the flux distribution rule~\eqref{eq:fluxdistr_1-1} is macroscopic, thus it ignores the vehicle distribution over the microscopic speed classes.

\begin{remark}
An obvious way to satisfy Eq.~\eqref{eq:fluxdistr_1-1_micro} is to choose $f^2_{0j}=f^1_{m_1j}$ for all $j$. A less obvious one is $f^2_{0j}=0$ for $j=1,\,\dots,\,n-1$, $f^2_{0n}=q^1_{m_1}$.
\end{remark}

In order to fix a criterion for selecting the $f^2_{0j}$'s, which can possibly act as a paradigm for studying also the next types of junctions, we take inspiration from the topological equivalence between:
\begin{inparaenum}[(a)]
\item two roads connected by a 1-1 junction;
\item a single road without junction. 
\end{inparaenum}
Following Fig.~\ref{fig:1-1_junction}, we identify the respective road cells near the junction as $I^1_{m_1}\leftrightarrow I_{m_1}$, $I^2_1\leftrightarrow I_{m_1+1}$, $I^2_2\leftrightarrow I_{m_1+2}$ and write the solution across the junction in the two cases by integrating Eq.~\eqref{eq:single.road.eq} up to a time $t>0$:
\begin{align*}
	&\text{(a)}\ f^2_{1j}(t)=f^2_{1j}(0)+\displaystyle{\int_0^t}\Bigl\{v_j\left(\Phi^2_{0,1}f_{0j}^2-\Phi^2_{1,2}f^2_{1j}\right)
		+G_j[\f^2_1,\,\f^2_1]-f^2_{1j}L[\f^2_1]\Bigr\}\,ds \\
	&\text{(b)}\ f_{m_1+1,j}(t)=f_{m_1+1,j}(0)+\displaystyle{\int_0^t}\Bigl\{v_j\left(\Phi_{m_1,m_1+1}f_{m_1j}-\Phi_{m_1+1,m_1+2}f_{m_1+1,j}\right) \\
		& \qquad\qquad\qquad\qquad\qquad\qquad\qquad +G_j[\f_{m_1+1},\,\f_{m_1+1}]-f_{m_1+1,j}L[\f_{m_1+1}]\Bigr\}\,ds.
\end{align*}
Imposing, as it is natural, that the two solutions coincide, i.e.,
$$ f^1_{m_1j}=f_{m_1j}, \quad f^2_{1j}=f_{m_1+1,j}, \quad f^2_{2j}=f_{m_1+2,j} \qquad \forall\,j, $$
implies in particular $\rho^2_1=\rho_{m_1+1}$, $\rho^2_2=\rho_{m_1+2}$, hence, according to Eq.~\eqref{eq:fluxlim}, $\Phi^2_{1,2}=\Phi_{m_1+1,m_1+2}$. Next, subtracting term by term yields
$$ \int_0^tv_j\left(\Phi^2_{0,1}f^2_{0j}-\Phi_{m_1,m_1+1}f^1_{m_1j}\right)\,ds=0 \qquad \forall\,j, $$
that is, for the arbitrariness of $t>0$ and owing to Eq.~\eqref{eq:Phi_1-1},
$$ v_j\left(\Phi^1_{m_1,m_1+1}f^2_{0j}-\Phi_{m_1,m_1+1}f^1_{m_1j}\right)=0 \qquad \forall\,j. $$
Finally, in order for roads $r=1,\,2$ in Fig.~\ref{fig:1-1_junction}a to be truly a unique road like in Fig.~\ref{fig:1-1_junction}b we must define $\Phi^1_{m_1,m_1+1}=\Phi_{m_1,m_1+1}$, so that ultimately
$$ v_j\Phi_{m_1,m_1+1}\left(f^2_{0j}-f^1_{m_1j}\right)=0 \qquad \forall\,j. $$

This relationship gives $f^2_{0j}=f^1_{m_1j}$, which is indeed compatible with Eq.~\eqref{eq:fluxdistr_1-1_micro}. Notice however that such a condition is not tight for $j=1$ since $v_1=0$. In other words, it is possible to choose $f^2_{01}\ne f^1_{m_11}$ while still preserving the equivalence between cases (a) and (b) of Fig.~\ref{fig:1-1_junction}. Considering that motionless vehicles possibly present in the cell $I^1_{m_1}$ cannot cross the junction by definition, an alternative choice, which will turn out to be more convenient for the subsequent theory (cf. Remark~\ref{rem:dens-flux_est} at the end of Section~\ref{sect:2-1_junction}), is $f^2_{01}=0$.

On the whole, the transmission conditions
$$
f^2_{0j}=
\begin{cases}
	0 & \text{if\ } j=1 \\
	f^1_{m_1j} & \text{if\ } j\geq 2
\end{cases}
$$
mean that vehicles do not change speed class when crossing the junction. Indeed the speed distribution of moving vehicles beyond the junction is the same as that before the junction.

\subsubsection{1-2 junction: The flux distribution rule}
\label{sect:1-2_junction}
\begin{figure}[!t]
\centering
\includegraphics[width=0.5\textwidth]{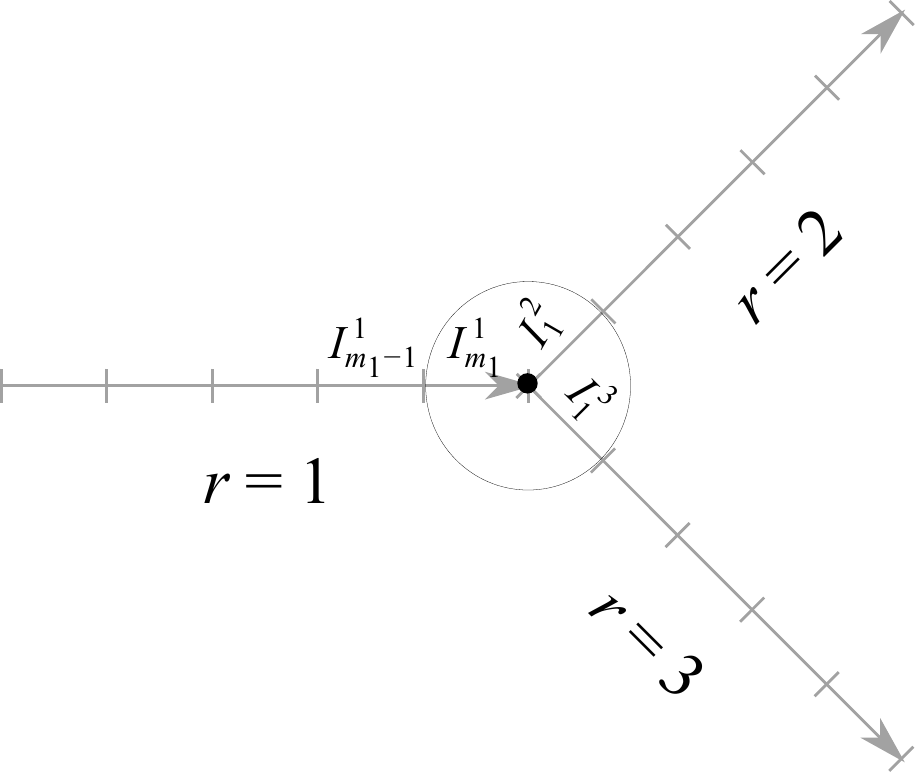}
\caption{Prototype of 1-2 junction}
\label{fig:1-2_junction}
\end{figure}
Let us consider the 1-2 junction illustrated in Fig.~\ref{fig:1-2_junction}, where $r=1$ is the incoming road and $r=2,\,3$ are the outgoing roads. Writing Eq.~\eqref{eq:single.road.eq} in the road cells encompassed by the circle, then summing over $j=1,\,\dots,\,n$, and taking Eq.~\eqref{eq:conservativeness} into account gives:
$$
\begin{cases}
	\dfrac{d\rho^1_{m_1}}{dt}+\Phi^1_{m_1,m_1+1}q^1_{m_1}-\Phi^1_{m_1-1,m_1}q^1_{m_1-1}=0 \\[3mm]
	\dfrac{d\rho^2_1}{dt}+\Phi^2_{1,2}q^2_1-\Phi^2_{0,1}q^2_0=0 \\[3mm]
	\dfrac{d\rho^3_1}{dt}+\Phi^3_{1,2}q^3_1-\Phi^3_{0,1}q^3_0=0,
\end{cases}
$$
whence, summing further term by term,
\begin{align*}
	\frac{d}{dt}(\rho^1_{m_1}+\rho^2_1+\rho^3_1) &= \Phi^1_{m_1-1,m_1}q^1_{m_1-1}-\Phi^2_{1,2}q^2_1-\Phi^3_{1,2}q^3_1 \\
	& \phantom{=}+\left(\Phi^2_{0,1}q^2_0+\Phi^3_{0,1}q^3_0-\Phi^1_{m_1,m_1+1}q^1_{m_1}\right).
\end{align*}
The mass conservation principle holds at the junction provided the quantity in brackets is zero:
\begin{equation}
	\Phi^1_{m_1,m_1+1}q^1_{m_1}=\Phi^2_{0,1}q^2_0+\Phi^3_{0,1}q^3_0,
	\label{eq:masscons_1-2}
\end{equation}
which gives a constraint on $\Phi^1_{m_1,m_1+1}$, $\{f^2_{0j}\}_{j=1}^{n}$, $\{f^3_{0j}\}_{j=1}^{n}$.

In order to characterize the dynamics at the junction, Eq.~\eqref{eq:masscons_1-2} has to be complemented with a model of the flow splitting occurring when incoming vehicles from road $r=1$ redistribute on the outgoing roads $r=2,\,3$. At a macroscopic level, such a flow splitting is classically expressed as follows:
\begin{trafficrule}{Flux distribution}
There exists a number $a\in[0,\,1]$, called the \emph{flux distribution coefficient}, such that
\begin{equation}
	q^2_0=aq^1_{m_1}, \qquad q^3_0=(1-a)q^1_{m_1}.
	\label{eq:fluxdistr}
\end{equation}
\end{trafficrule}
In practice the coefficient $a$ is the percentage of vehicles which continue on road $r=2$ after leaving road $r=1$. Complementally, $1-a$ is the percentage of those which continue on road $r=3$. Notice that this rule is inspired by the conservation of the mass of cars, in fact it implies $q^1_{m_1}=q^2_0+q^3_0$, namely the continuity of the free flux at the junction.

Equation~\eqref{eq:fluxdistr} yields
$$ \sum_{j=1}^{n}v_j\left(f^2_{0j}-af^1_{m_1j}\right)=0, \qquad \sum_{j=1}^{n}v_j\left(f^3_{0j}-(1-a)f^1_{m_1j}\right)=0, $$
which, guided by the criterion established in Section~\ref{sect:1-1_junction}, we fulfill as
\begin{equation}
f^2_{0j}=
\begin{cases}
	0 & \text{if\ } j=1 \\
	af^1_{m_1j} & \text{if\ } j\geq 2,
\end{cases}
\qquad
f^3_{0j}=
\begin{cases}
	0 & \text{if\ } j=1 \\
	(1-a)f^1_{m_1j} & \text{if\ } j\geq 2.
\end{cases}
\label{eq:f0j_1-2}
\end{equation}

Finally, plugging Eq.~\eqref{eq:fluxdistr} into Eq.~\eqref{eq:masscons_1-2} and owing to the arbitrariness of the incoming flux $q^1_{m_1}$ we deduce
\begin{equation}
	\Phi^1_{m_1,m_1+1}=a\Phi^2_{0,1}+(1-a)\Phi^3_{0,1}.
	\label{eq:Phi_1-2}
\end{equation}

On the whole, Eqs.~\eqref{eq:f0j_1-2},~\eqref{eq:Phi_1-2} completely define an admissible set of transmission conditions for a 1-2 junction.

\begin{remark}
For $a=0,\,1$ the 1-2 junction reduces to a 1-1 junction plus a disconnected road. The choice~\eqref{eq:f0j_1-2} guarantees that the corresponding solution is the same as that of the topologically equivalent case of a single road without junction, cf. Section~\ref{sect:1-1_junction}.
\end{remark}

\subsubsection{2-1 junction: The right-of-way rule}
\label{sect:2-1_junction}
\begin{figure}[!t]
\centering
\includegraphics[width=0.5\textwidth]{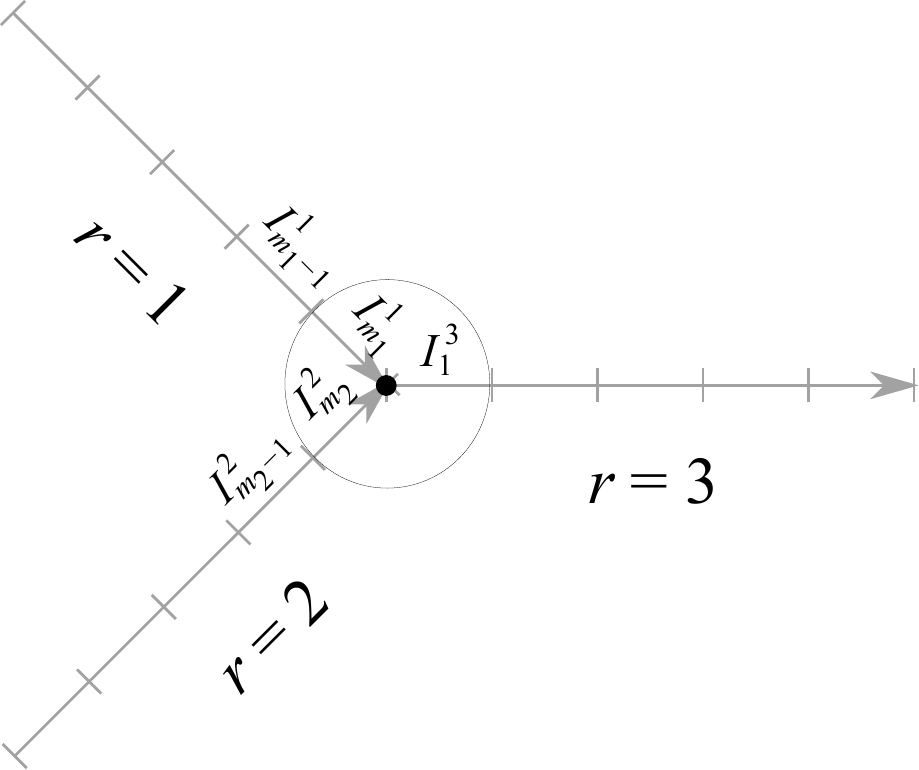}
\caption{Prototype of 2-1 junction}
\label{fig:2-1_junction}
\end{figure}
Let us consider now the 2-1 junction illustrated in Fig.~\ref{fig:2-1_junction}, where $r=1,\,2$ are the incoming roads and $r=3$ is the outgoing road. In order to enforce the mass conservation constraint at the junction we write Eq.~\eqref{eq:single.road.eq} in the road cells encompassed by the circle, summing over $j=1,\,\dots,\,n$ and taking Eq.~\eqref{eq:conservativeness} into account:
$$
\begin{cases}
	\dfrac{d\rho^1_{m_1}}{dt}+\Phi^1_{m_1,m_1+1}q^1_{m_1}-\Phi^1_{m_1-1,m_1}q^1_{m_1-1}=0 \\[3mm]
	\dfrac{d\rho^2_{m_2}}{dt}+\Phi^2_{m_2,m_2+1}q^2_{m_2}-\Phi^2_{m_2-1,m_2}q^2_{m_2-1}=0 \\[3mm]
	\dfrac{d\rho^3_1}{dt}+\Phi^3_{1,2}q^3_1-\Phi^3_{0,1}q^3_0=0.
\end{cases}
$$
Then we further sum term by term to discover
\begin{align*}
	\frac{d}{dt}(\rho^1_{m_1}+\rho^2_{m_2}+\rho^3_1) &= \Phi^1_{m_1-1,m_1}q^1_{m_1-1}+\Phi^2_{m_2-1,m_2}q^2_{m_2-1}-\Phi^3_{1,2}q^3_1 \\
	& \phantom{=}+\left(\Phi^3_{0,1}q^3_0-\Phi^1_{m_1,m_1+1}q^1_{m_1}-\Phi^2_{m_2,m_2+1}q^2_{m_2}\right)
\end{align*}
and we set to zero the quantity in brackets at the right-hand side:
\begin{equation}
	\Phi^1_{m_1,m_1+1}q^1_{m_1}+\Phi^2_{m_2,m_2+1}q^2_{m_2}=\Phi^3_{0,1}q^3_0.
	\label{eq:masscons_2-1}
\end{equation}
This equation formalizes the constraint on $\Phi^1_{m_1,m_1+1}$, $\Phi^2_{m_2,m_2+1}$, $\{f^3_{0j}\}_{j=1}^{n}$ to be imposed at the junction in this case.

Incoming fluxes from roads $r=1,\,2$ merge at the junction for entering road $r=3$. However this may not be always possible, because the outgoing flux cannot in any case exceed the maximum admissible flux on road $r=3$ ($q^3_{0}\leq 1$) nor can the car density at the entrance of road $r=3$ exceed the maximum admissible density in a single road cell ($\rho^3_0\leq 1$). Consequently, it is necessary to devise a model for the flow merging based on a criterion which establishes which one of the incoming roads has the right-of-way in case that their vehicles cannot flow simultaneously into road $r=3$.

Let $r^\ast\in\{1,\,2\}$ be the incoming road with right-of-way and $r_\ast\in\{1,\,2\}\setminus\{r^\ast\}$ the other incoming road. Moreover, let $p\in[0,\,1]$ be a flux threshold (to be determined) which triggers the right-of-way of road $r^\ast$ over road $r_\ast$. At a macroscopic level, a right-of-way rule can be formulated in the following way:
\begin{trafficrule}{Right-of-way}
\begin{itemize}
\item If $q^1_{m_1}+q^2_{m_2}\leq p$ then
\begin{equation}
	q^3_0=q^1_{m_1}+q^2_{m_2}.
	\label{eq:rightofway_1}
\end{equation}
\item Conversely, if $q^1_{m_1}+q^2_{m_2}>p$ then
\begin{equation}
\begin{cases}
	q^3_0=q^{r^\ast}_{m_{r^\ast}} \\
	\Phi^{r_\ast}_{m_{r_\ast},m_{r_\ast}+1}=0.
\end{cases}
\label{eq:rightofway_2}
\end{equation}
\end{itemize}
\end{trafficrule}

This rule states that if the total incoming flux does not exceed the threshold $p$ then cars from both roads $r=1,\,2$ flow simultaneously into road $r=3$. Otherwise only cars from road $r^\ast$ are allowed to flow into road $r=3$, cars from road $r_\ast$ being forced to stop (this is indeed the effect of setting to zero the flux limiter at the end of road $r_\ast$). The rule is again inspired by the continuity of the free flux across the junction. This is evident for Eq.~\eqref{eq:rightofway_1}, whereas for Eq.~\eqref{eq:rightofway_2} it suffices to consider that the condition imposed on the flux limiter $\Phi^{r_\ast}_{m_{r_\ast},m_{r_\ast}+1}$ disconnects road $r_\ast$ from the junction, thereby making road $r=3$ the continuation of road $r^\ast$.

Owing to the rule above we get
$$
\begin{cases}
	\displaystyle{\sum_{j=1}^{n}}v_j\left(f^3_{0j}-f^1_{m_1j}-f^2_{m_2j}\right)=0 & \text{if\ } q^1_{m_1}+q^2_{m_2}\leq p \\
	\displaystyle{\sum_{j=1}^{n}}v_j\left(f^3_{0j}-f^{r^\ast}_{m_{r^\ast}j}\right)=0 & \text{if\ } q^1_{m_1}+q^2_{m_2}>p,
\end{cases}
$$
which, appealing again to the criterion discussed in Section~\ref{sect:1-1_junction}, we fulfill as
\begin{equation}
f^3_{0j}=
\begin{cases}
	0 & \text{if\ } j=1 \\
	f^1_{m_1j}+f^2_{m_2j} & \text{if\ } j\geq 2\ \text{and\ } q^1_{m_1}+q^2_{m_2}\leq p \\
	f^{r^\ast}_{m_{r^\ast}j} & \text{if\ } j\geq 2\ \text{and\ } q^1_{m_1}+q^2_{m_2}>p.
\end{cases}
\label{eq:f0j_2-1}
\end{equation}

Moreover, by plugging Eqs.~\eqref{eq:rightofway_1},~\eqref{eq:rightofway_2} into Eq.~\eqref{eq:masscons_2-1} and invoking the arbitrariness of the incoming fluxes $q^1_{m_1}$, $q^2_{m_2}$ we obtain the flux limiters at the end of the incoming roads:
\begin{itemize}
\item if $q^1_{m_1}+q^2_{m_2}\leq p$ then
\begin{equation}
\begin{cases}
	\Phi^1_{m_1,m_1+1}=\Phi^3_{0,1} \\
	\Phi^2_{m_2,m_2+1}=\Phi^3_{0,1};
\end{cases}
\label{eq:Phi_2-1_1}
\end{equation}
\item if $q^1_{m_1}+q^2_{m_2}>p$ then
\begin{equation}
\begin{cases}
	\Phi^{r^\ast}_{m_{r^\ast},m_{r^\ast}+1}=\Phi^3_{0,1} \\
	\Phi^{r_\ast}_{m_{r_\ast},m_{r_\ast}+1}=0.
\end{cases}
\label{eq:Phi_2-1_2}
\end{equation}
\end{itemize}

Equations~\eqref{eq:f0j_2-1}--\eqref{eq:Phi_2-1_2} provide a complete set of transmission conditions for the 2-1 junction up to determining the parameter $p$. To this end we focus on the case $q^1_{m_1}+q^2_{m_2}\leq p$ noticing that, according to Eq.~\eqref{eq:f0j_2-1}, the density at the entrance of road $r=3$ is (cf. Eq.~\eqref{eq:rhocell}):
\begin{align*}
	\rho^3_0 &= \sum_{j=2}^{n}f^3_{0j}=\sum_{j=2}^{n}\left(f^1_{m_1j}+f^2_{m_2j}\right) & (\text{because\ } f^3_{01}=0) \\
	&\leq \frac{1}{v_2}\sum_{j=2}^{n}v_j\left(f^1_{m_1j}+f^2_{m_2j}\right) & (\text{because\ } 0<v_2\leq v_j\ \forall\,j\geq 2) \\
	&= \frac{q^1_{m_1}+q^2_{m_2}}{v_2} & (\text{because\ } v_1=0) \\
	&\leq \frac{p}{v_2}.
\end{align*}
As previously mentioned, the merging of the two flows is admissible if $p\leq 1$ and $\rho^3_0\leq 1$, which is guaranteed by choosing
$$ p\leq v_2, $$
for instance exactly $p=v_2$.

\begin{remark}
Here is the point where the choice $f^3_{01}=0$, as opposed to $f^3_{01}=f^1_{m_11}+f^2_{m_21}$, is fully exploited. In fact the above estimate of the density via the flux is ultimately possible thanks to the fact that the sum defining $\rho^3_0$ starts from $j=2$.
\label{rem:dens-flux_est}
\end{remark}

\begin{remark}
When $q^1_{m_1}+q^2_{m_2}>p$ the 2-1 junction reduces to a 1-1 junction plus a disconnected road. The choice~\eqref{eq:f0j_2-1} is such that the corresponding solution is, once again, the same as that of a single road without junction, cf. Section~\ref{sect:1-1_junction}.
\end{remark}

\section{Numerical tests}
\label{sect:numerics}
In this section we present a computational analysis of two exploratory case studies, which show the potential of our kinetic model to tackle realistic network problems of interest for applications.

We partition every road of the network in $m_r=6$ cells and we fix $n=6$ uniformly spaced speed classes, cf. Eq.~\eqref{eq:vj_uniform}. On each road, we integrate in time the differential equations~\eqref{eq:single.road.eq} by the classical fourth order explicit Runge-Kutta scheme. As already recalled in the Introduction, we take from~\cite{fermo2013SIAP} the expression of the table of games characterizing the gain operators~\eqref{eq:GL} (we do not report it here for the sake of conciseness).

\subsection{Traffic circle: The effect of right-of-way inversion}
\begin{figure}[!t]
\centering
\includegraphics[width=0.7\textwidth]{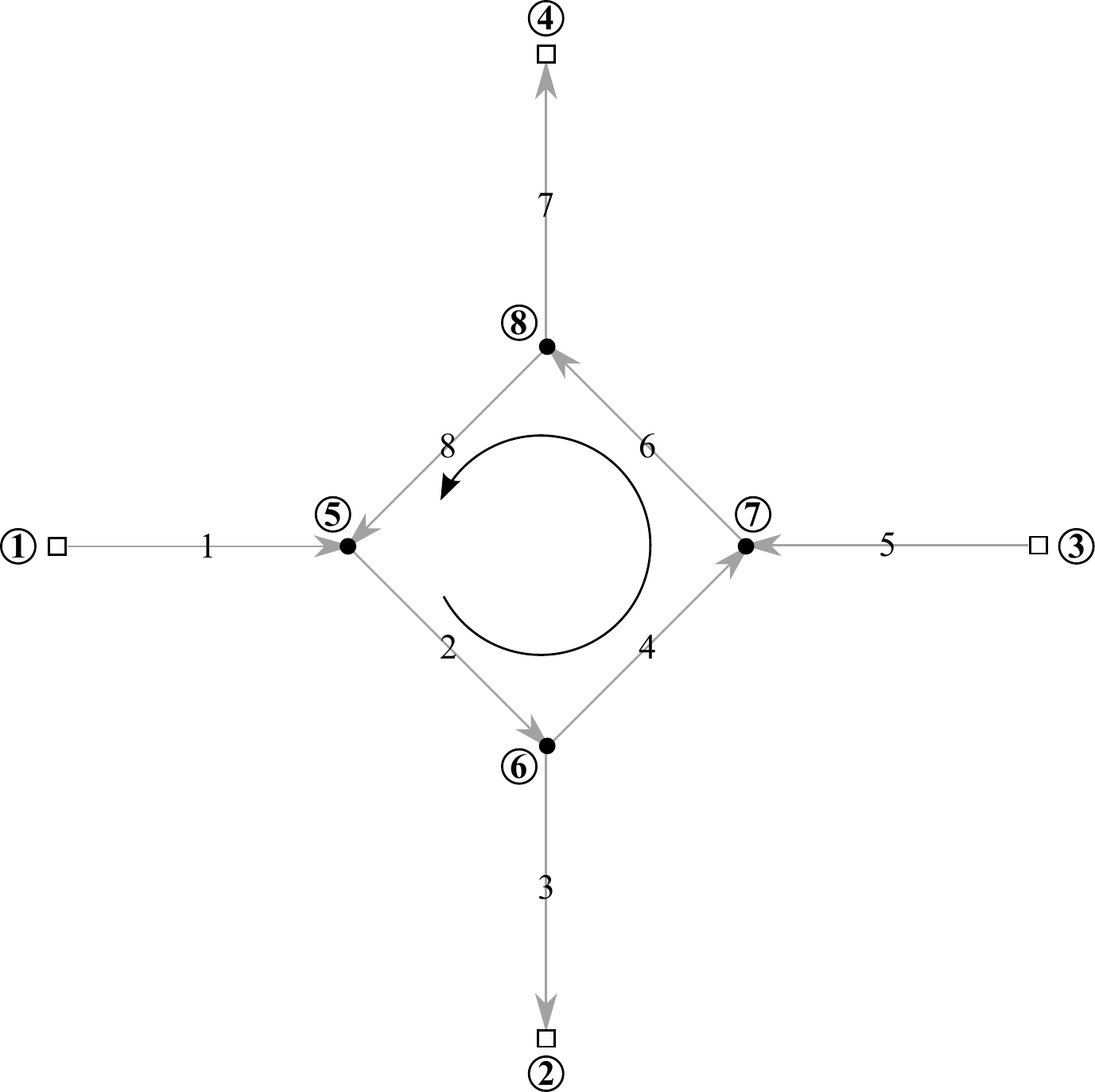}
\caption{A prototype of counterclockwise traffic circle (right-hand traffic)}
\label{fig:traffic_circle}
\end{figure}
The first case study concerns a prototype of traffic circle with counterclockwise circulation (right-hand traffic), which we model as a network of eight roads and as many vertexes, see Fig.~\ref{fig:traffic_circle}. Roads $1$, $5$ give access to the inner circle, whereas roads $3$, $7$ are the exits. The actual circle is formed by roads $2$, $4$, $6$, and $8$. 

Vertexes \ding{192} to \ding{195} are the peripheral ones. In particular, \ding{192}, \ding{194} at the beginning of roads $1$, $5$, respectively, are access points where boundary conditions $\{f^1_{0j}\}_{j=1}^{6}$, $\{f^5_{0j}\}_{j=1}^{6}$ have to be prescribed, whereas \ding{193}, \ding{195} at the end of roads $3$, $7$, respectively, are exit points where downstream conditions have to be implemented by specifying the flux limiters $\Phi^3_{6,7}$, $\Phi^7_{6,7}$. We invariably set
$$ f^1_{0j}=f^5_{0j}=
	\begin{cases}
		0 & \text{if\ } j\leq 5 \\
		1 & \text{if\ } j=6,
	\end{cases}
	\qquad\qquad
	\Phi^3_{6,7}=\Phi^7_{6,7}=1,
$$
thus we assume that vehicles access the network at the maximum density and speed and freely leave it thanks e.g., to the absence of traffic jams downstream.

Vertexes \ding{196} to \ding{199} are true junctions. In particular, \ding{197}, \ding{199} are 1-2 junctions calling for a flux distribution parameter, that we fix to $a=\frac{1}{2}$. Hence, according to the theory developed in Section~\ref{sect:1-2_junction}, we have the following transmission conditions:
\begin{itemize}
\item At junction \ding{197}:
$$ f^3_{0j}=f^4_{0j}=
	\begin{cases}
		0 & \text{if\ } j=1 \\
		\frac{1}{2}f^2_{6j} & \text{if\ } j\geq 2,
	\end{cases}
	\qquad\qquad
	\Phi^2_{6,7}=\frac{\Phi^3_{0,1}+\Phi^4_{0,1}}{2},
$$
where $\Phi^3_{0,1}$, $\Phi^4_{0,1}$ are defined according to Eq.~\eqref{eq:fluxlim}.
\item At junction \ding{199}:
$$ f^7_{0j}=f^8_{0j}=
	\begin{cases}
		0 & \text{if\ } j=1 \\
		\frac{1}{2}f^6_{6j} & \text{if\ } j\geq 2,
	\end{cases}
	\qquad\qquad
	\Phi^6_{6,7}=\frac{\Phi^7_{0,1}+\Phi^8_{0,1}}{2},
$$
where $\Phi^7_{0,1}$, $\Phi^8_{0,1}$ are in turn defined according to Eq.~\eqref{eq:fluxlim}.
\end{itemize}

Conversely, \ding{196}, \ding{198} are 2-1 junctions, which require the specification of a right-of-way rule. Following the theory developed in Section~\ref{sect:2-1_junction}, we set the flux threshold to $p=v_2=\frac{1}{5}$ (cf. Eq.~\eqref{eq:vj_uniform} with $n=6$), then we need to decide which roads have the right-of-way at each junction. We consider three possible choices, aiming at comparing their global impact on the flow in the traffic circle.

\paragraph*{Case 1: Usual right-of-way}
The first choice consists in assuming that, coherently with usual traffic rules, vehicles entering the traffic circle give priority to the circulating flow. This means that at junction \ding{196} road $8$ has right-of-way over road $1$ and, likewise, at junction \ding{198} road $4$ has right-of-way over road $5$. Hence in this case transmission conditions are implemented as follows:
\begin{itemize}
\item At junction \ding{196}:
\begin{gather*}
	f^2_{0j}=
	\begin{cases}
		0 & \text{if\ } j=1 \\
		f^1_{6j}+f^8_{6j} & \text{if\ } j\geq 2,\ q^1_6+q^8_6\leq\frac{1}{5} \\
		f^8_{6j} & \text{if\ } j\geq 2,\ q^1_6+q^8_6>\frac{1}{5},
	\end{cases}
	\\[3mm]
	\begin{array}{rl}
		\left.
		\begin{array}{rcl}
			\Phi^1_{6,7} & = & \Phi^2_{0,1} \\
			\Phi^8_{6,7} & = & \Phi^2_{0,1}
		\end{array}
		\right\} & \text{if\ } q^1_6+q^8_6\leq\frac{1}{5}
		\\[5mm]
		\left.
		\begin{array}{rcl}
			\Phi^1_{6,7} & = & 0 \\
			\Phi^8_{6,7} & = & \Phi^2_{0,1}
		\end{array}
		\right\} & \text{if\ } q^1_6+q^8_6>\frac{1}{5},
	\end{array}
\end{gather*}
where $\Phi^2_{0,1}$ is defined according to Eq.~\eqref{eq:fluxlim}.
\item At junction \ding{198}:
\begin{gather*}
	f^6_{0j}=
	\begin{cases}
		0 & \text{if\ } j=1 \\
		f^4_{6j}+f^5_{6j} & \text{if\ } j\geq 2,\ q^4_6+q^5_6\leq\frac{1}{5} \\
		f^4_{6j} & \text{if\ } j\geq 2,\ q^4_6+q^5_6>\frac{1}{5},
	\end{cases}
	\\[3mm]
	\begin{array}{rl}
		\left.
		\begin{array}{rcl}
			\Phi^4_{6,7} & = & \Phi^6_{0,1} \\
			\Phi^5_{6,7} & = & \Phi^6_{0,1}
		\end{array}
		\right\} & \text{if\ } q^4_6+q^5_6\leq\frac{1}{5}
		\\[5mm]
		\left.
		\begin{array}{rcl}
			\Phi^4_{6,7} & = & \Phi^6_{0,1} \\
			\Phi^5_{6,7} & = & 0
		\end{array}
		\right\} & \text{if\ } q^4_6+q^5_6>\frac{1}{5},
	\end{array}
\end{gather*}
where $\Phi^6_{0,1}$ is defined according to Eq.~\eqref{eq:fluxlim} as well.
\end{itemize}

\paragraph*{Case 2: Inverted right-of-way}
The second choice consists in inverting the rule above, assuming that, contrary to the usual traffic rules, vehicles circulating in the traffic circle give priority to those entering it. Hence at junction \ding{196} road $1$ has, in this case, right-of-way over road $8$ and, similarly, at junction \ding{198} road $5$ has right-of-way over road $4$. Transmission conditions modify consequently as:
\begin{itemize}
\item At junction \ding{196}:
\begin{gather*}
	f^2_{0j}=
	\begin{cases}
		0 & \text{if\ } j=1 \\
		f^1_{6j}+f^8_{6j} & \text{if\ } j\geq 2,\ q^1_6+q^8_6\leq\frac{1}{5} \\
		f^1_{6j} & \text{if\ } j\geq 2,\ q^1_6+q^8_6>\frac{1}{5},
	\end{cases}
	\\[3mm]
	\begin{array}{rl}
		\left.
		\begin{array}{rcl}
			\Phi^1_{6,7} & = & \Phi^2_{0,1} \\
			\Phi^8_{6,7} & = & \Phi^2_{0,1}
		\end{array}
		\right\} & \text{if\ } q^1_6+q^8_6\leq\frac{1}{5}
		\\[5mm]
		\left.
		\begin{array}{rcl}
			\Phi^1_{6,7} & = & \Phi^2_{0,1} \\
			\Phi^8_{6,7} & = & 0
		\end{array}
		\right\} & \text{if\ } q^1_6+q^8_6>\frac{1}{5};
	\end{array}
\end{gather*}
\item At junction \ding{198}:
\begin{gather*}
	f^6_{0j}=
	\begin{cases}
		0 & \text{if\ } j=1 \\
		f^4_{6j}+f^5_{6j} & \text{if\ } j\geq 2,\ q^4_6+q^5_6\leq\frac{1}{5} \\
		f^5_{6j} & \text{if\ } j\geq 2,\ q^4_6+q^5_6>\frac{1}{5},
	\end{cases}
	\\[5mm]
	\begin{array}{rl}
		\left.
		\begin{array}{rcl}
			\Phi^4_{6,7} & = & \Phi^6_{0,1} \\
			\Phi^5_{6,7} & = & \Phi^6_{0,1}
		\end{array}
		\right\} & \text{if\ } q^4_6+q^5_6\leq\frac{1}{5}
		\\[5mm]
		\left.
		\begin{array}{rcl}
			\Phi^4_{6,7} & = & 0 \\
			\Phi^5_{6,7} & = & \Phi^6_{0,1}
		\end{array}
		\right\} & \text{if\ } q^4_6+q^5_6>\frac{1}{5},
	\end{array}
\end{gather*}
\end{itemize}
$\Phi^2_{0,1}$, $\Phi^6_{0,1}$ being defined like in Case 1.

\begin{figure}[!t]
\centering
\includegraphics[width=\textwidth]{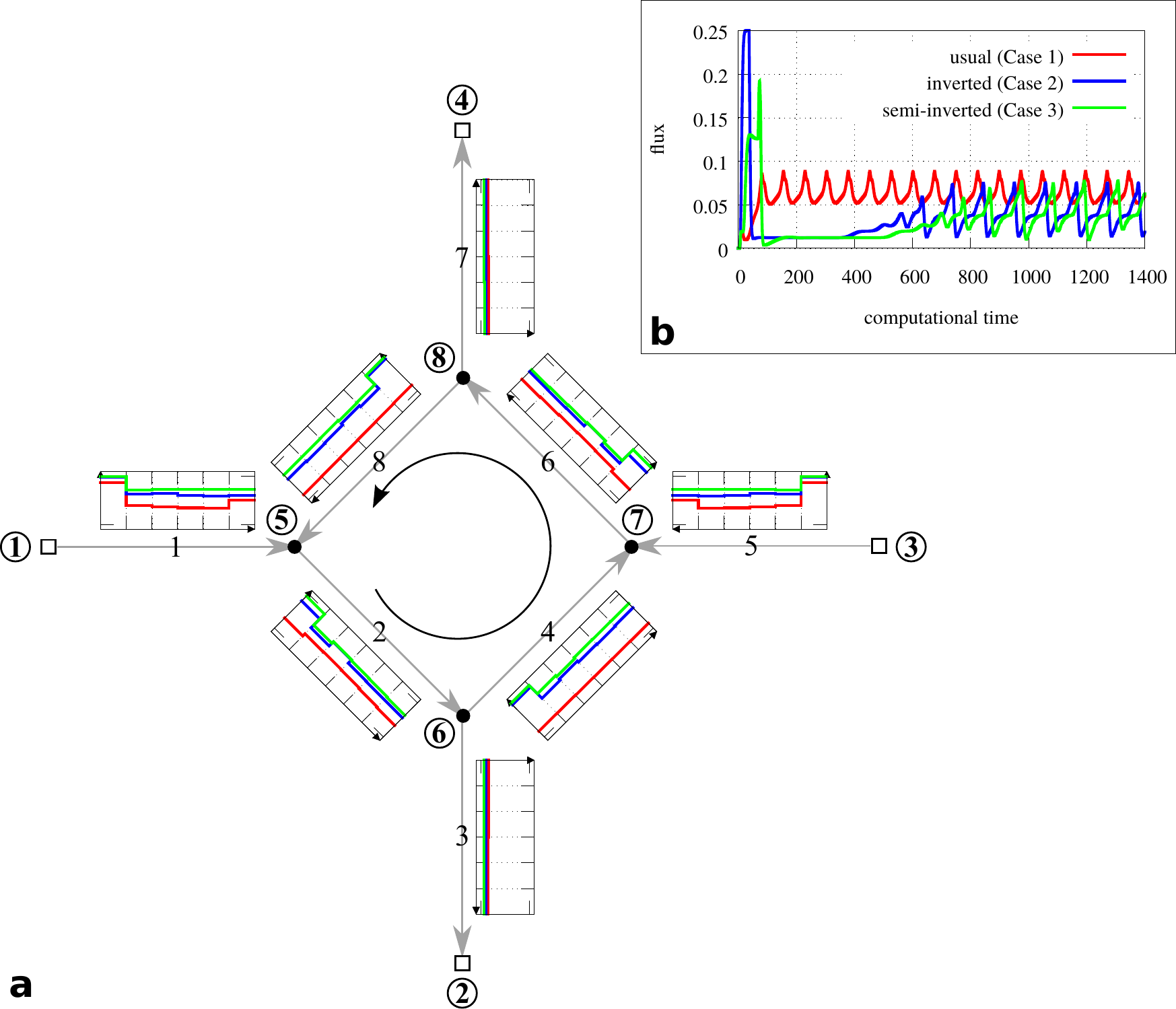}
\caption{(a) Instantaneous density profile in the network and (b) macroscopic flux at the beginning of roads 3, 7 for the problem of the traffic circle}
\label{fig:traffic_circle-sim}
\end{figure}

\paragraph*{Case 3: Semi-inverted right-of-way}
The third choice consists in assuming that the usual right-of-way rule is inverted at only one of the two 2-1 junctions, say e.g., junction \ding{198}. This can simulate the case in which such an inversion does not occur as a consequence of imposed traffic rules but is rather due to an incorrect behavior of drivers at that particular junction. Therefore transmission conditions at junction \ding{196} are like in Case 1, while those at junction \ding{198} are like in Case 2.
\medskip

Figure~\ref{fig:traffic_circle-sim}(a) shows the instantaneous profile of the vehicle density along the eight roads of the network. Red, blue, and green lines correspond to the aforesaid Cases 1, 2, and 3, respectively. Additionally, the top right panel in Fig.~\ref{fig:traffic_circle-sim}(b) displays the time trend of the macroscopic flux in the first cell ($i=1$) of roads 3, 7, i.e., immediately at the exit of the traffic circle. Because of the symmetry of the problem, it results $q^3_1=q^7_1$.

The simulation clearly indicates that the usual right-of-way rule at junctions \ding{196}, \ding{198} (Case 1, red) gives rise, in the long run, to a lower congestion in the circle. Moreover, the outgoing flux of vehicles settles on a periodic oscillatory trend with maximum peak and mean values (apart from a short transient initial period). The inverted right-of-way rule at junctions \ding{196}, \ding{198} (Case 2, blue) causes instead a consistent rise of congestion in the circle followed by a remarkable outflow drop in roads 3, 7. Actually, the semi-inverted right-of-way rule (Case 3, green) demonstrates that a systematic violation of the usual priority at just one of these two junctions is sufficient by itself for the efficiency of the traffic circle to be severely compromised.

In conclusion, this case study confirms that the suitable right-of-way rule at a traffic circle consists in giving priority to the circulating flow, i.e., to vehicles already occupying the circle. In fact, this makes the overall car flow the most fluid one, which implies the most effective action of the traffic circle in redirecting vehicles from incoming to outgoing roads.

\subsection{Fork with internal link: The effect of road conditions}
\begin{figure}[!t]
\centering
\includegraphics[width=0.8\textwidth]{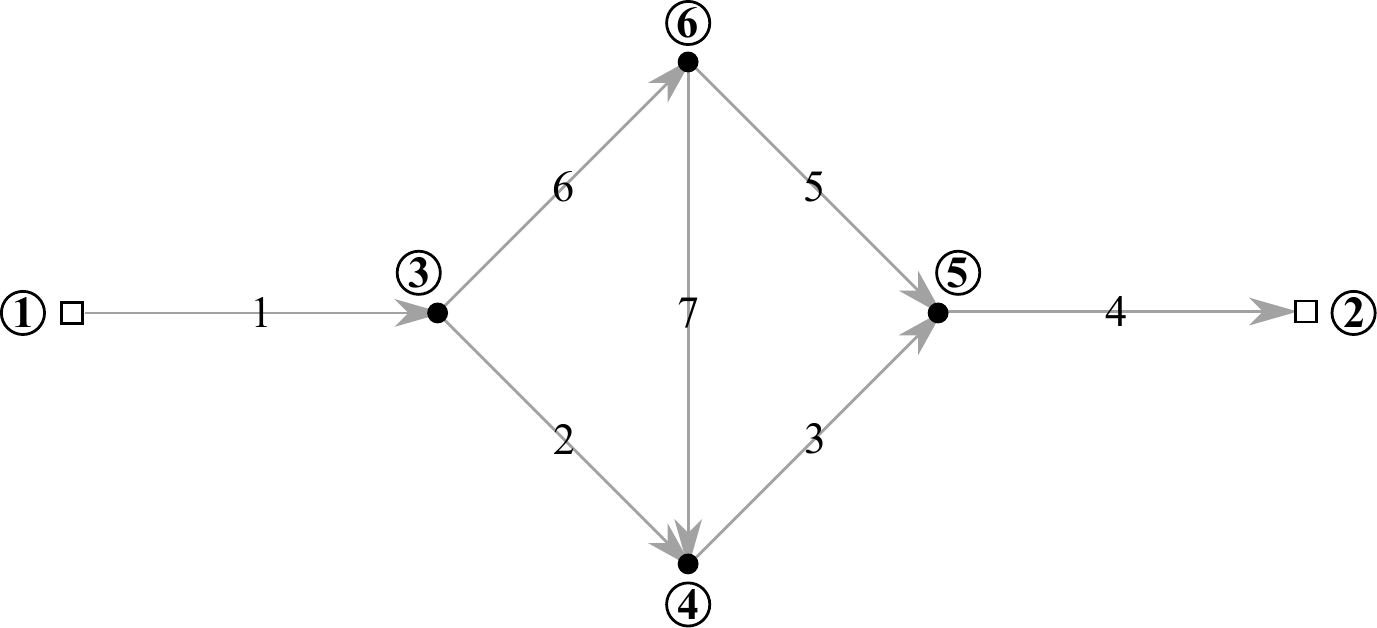}
\caption{A road fork with internal link (road 7)}
\label{fig:fork}
\end{figure}
The second case study concerns a road fork provided with an internal link, which allows vehicles to pass from one side of the fork to the other, see Fig.~\ref{fig:fork}. The goal of this example is to investigate, by means of the model developed in Section~\ref{sect:modeling}, the formation of congestions and the traffic fluidity downstream of the fork in connection with the actual practicability of such an internal link.

More precisely, with reference to Fig.~\ref{fig:fork}, we model various road conditions along edge 7 (the link) by means of a coefficient $\alpha\in [0,\,1]$, which parameterizes the probabilities of speed transition $A_{hk}^{j}[\rho^7_i]$ on that road, cf. Eq.~\eqref{eq:GL}. We recall again that the detailed expression of such probabilities can be found in~\cite{fermo2013SIAP}. For the present purposes, it will be sufficient to say that values of $\alpha$ close to $0$ stand for uncomfortable road conditions (such as e.g., scarce visibility, uneven road pavement, narrow roadway), which induce low speeds and little inclination of drivers to overtake. On the contrary, values of $\alpha$ close to $1$ stand for comfortable road conditions (such as e.g., good visibility, smooth road pavement, wide roadway), which give drivers more ease of maneuver.

At the peripheral vertexes \ding{192}, \ding{193} we impose the same access and downstream conditions as in the test of the traffic circle, namely:
$$ f^1_{0j}=
	\begin{cases}
		0 & \text{if\ } j\leq 5 \\
		1 & \text{if\ } j=6,
	\end{cases}
	\qquad\qquad
	\Phi^4_{6,7}=1,
$$
so that vehicles enter the network from road 1 at the maximum density and speed and freely leave it at the end of road 4.

At the 1-2 junctions \ding{194}, \ding{197} we fix the flux distribution parameter to $a=\frac{1}{2}$, hence from Section~\ref{sect:1-2_junction} we have:
$$ f^2_{0j}=f^6_{0j}=
	\begin{cases}
		0 & \text{if\ } j=1 \\
		\frac{1}{2}f^1_{6j} & \text{if\ } j\geq 2,
	\end{cases}
	\qquad\qquad
	\Phi^1_{6,7}=\frac{\Phi^2_{0,1}+\Phi^6_{0,1}}{2}
$$
and likewise
$$ f^5_{0j}=f^7_{0j}=
	\begin{cases}
		0 & \text{if\ } j=1 \\
		\frac{1}{2}f^6_{6j} & \text{if\ } j\geq 2,
	\end{cases}
	\qquad\qquad
	\Phi^6_{6,7}=\frac{\Phi^5_{0,1}+\Phi^7_{0,1}}{2},
$$
the flux limiters $\Phi^2_{0,1}$, $\Phi^5_{0,1}$, $\Phi^6_{0,1}$, $\Phi^7_{0,1}$ being given by Eq.~\eqref{eq:fluxlim}.

Finally, at the 2-1 junctions \ding{195}, \ding{196} we fix the flux threshold to $p=v_2=\frac{1}{5}$ (as it results from Eq.~\eqref{eq:vj_uniform} with $n=6$) and assume that roads 2, 3 have right-of-way over roads 7, 5, respectively (right-hand traffic). Thus from Section~\ref{sect:2-1_junction} we have:
\begin{gather*}
	f^3_{0j}=
	\begin{cases}
		0 & \text{if\ } j=1 \\
		f^2_{6j}+f^7_{6j} & \text{if\ } j\geq 2,\ q^2_6+q^7_6\leq\frac{1}{5} \\
		f^2_{6j} & \text{if\ } j\geq 2,\ q^2_6+q^7_6>\frac{1}{5}
	\end{cases}
	\intertext{with}
	\begin{array}{rl}
		\left.
		\begin{array}{rcl}
			\Phi^2_{6,7} & = & \Phi^3_{0,1} \\
			\Phi^7_{6,7} & = & \Phi^3_{0,1}
		\end{array}
		\right\} & \text{if\ } q^2_6+q^7_6\leq\frac{1}{5}
		\\[5mm]
		\left.
		\begin{array}{rcl}
			\Phi^2_{6,7} & = & \Phi^3_{0,1} \\
			\Phi^7_{6,7} & = & 0
		\end{array}
		\right\} & \text{if\ } q^2_6+q^7_6>\frac{1}{5}
	\end{array}
\end{gather*}
and likewise
\begin{gather*}
	f^4_{0j}=
	\begin{cases}
		0 & \text{if\ } j=1 \\
		f^3_{6j}+f^5_{6j} & \text{if\ } j\geq 2,\ q^3_6+q^5_6\leq\frac{1}{5} \\
		f^3_{6j} & \text{if\ } j\geq 2,\ q^3_6+q^5_6>\frac{1}{5}
	\end{cases}
	\intertext{with}
	\begin{array}{rl}
		\left.
		\begin{array}{rcl}
			\Phi^3_{6,7} & = & \Phi^4_{0,1} \\
			\Phi^5_{6,7} & = & \Phi^4_{0,1}
		\end{array}
		\right\} & \text{if\ } q^3_6+q^5_6\leq\frac{1}{5}
		\\[5mm]
		\left.
		\begin{array}{rcl}
			\Phi^3_{6,7} & = & \Phi^4_{0,1} \\
			\Phi^5_{6,7} & = & 0
		\end{array}
		\right\} & \text{if\ } q^3_6+q^5_6>\frac{1}{5},
	\end{array}
\end{gather*}
the flux limiters $\Phi^3_{0,1}$ and $\Phi^4_{0,1}$ being given by Eq.~\eqref{eq:fluxlim}.
\medskip

\begin{figure}[!t]
\centering
\includegraphics[width=\textwidth]{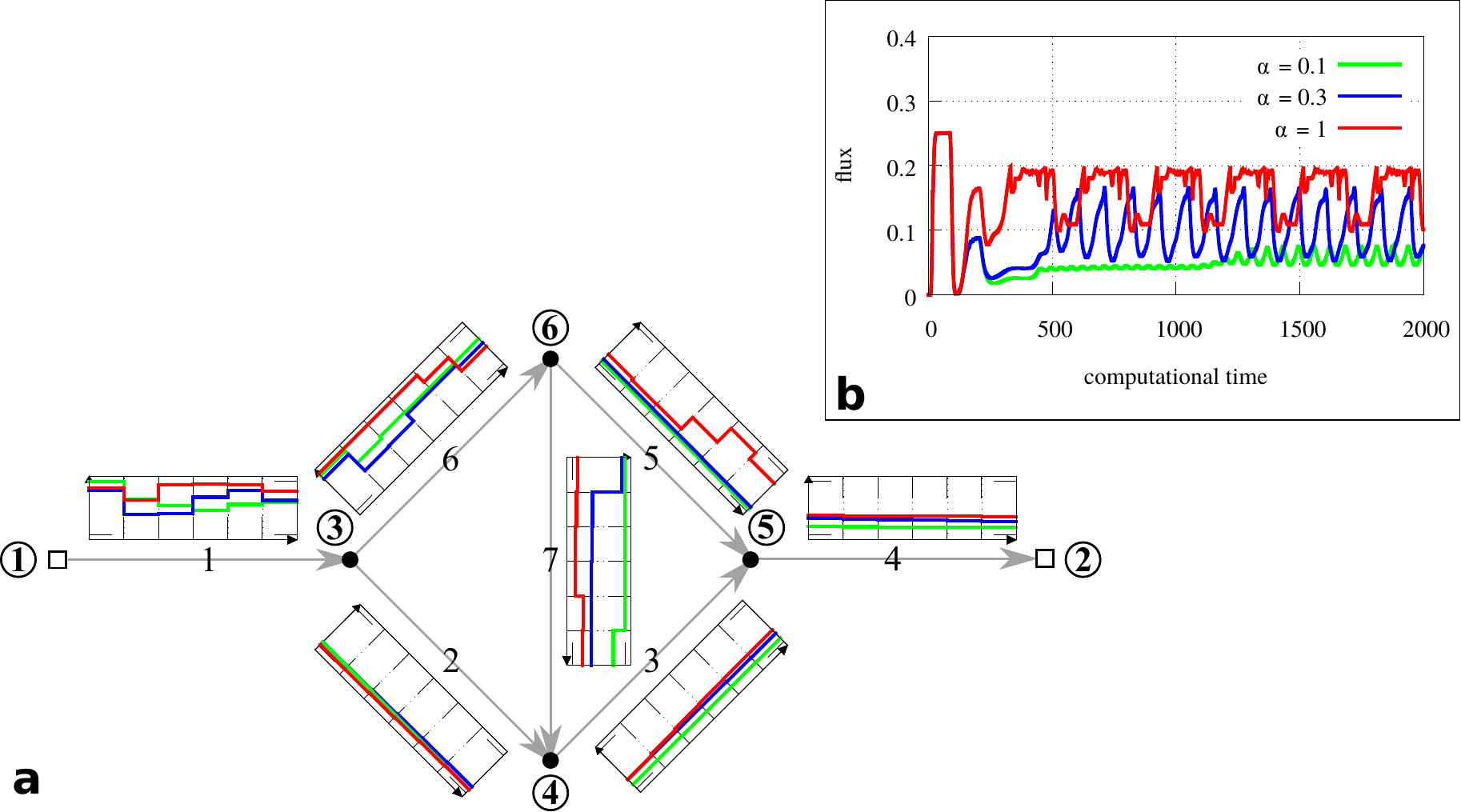}
\caption{(a) Instantaneous density profile in the network and (b) macroscopic flux at the beginning of road 4 for the problem of the fork}
\label{fig:fork-sim}
\end{figure}

Figure~\ref{fig:fork-sim}(a) shows the instantaneous car density profiles along the various roads of the network for three different values of the parameter $\alpha$ in road 7: $\alpha=1$ (optimal road conditions, red line), $\alpha=0.3$ (bad road conditions, blue line), $\alpha=0.1$ (very bad road conditions, green line). The simulation confirms that the worse the road conditions the higher the level of congestion in the link, where traffic is slowed down. The situation is reverted in roads 3, 4, 5 downstream of the fork, where instead better conditions of the upstream link increase the number of flowing cars, however without giving rise to congestions. In fact, the graph in Fig.~\ref{fig:fork}(b), in which the time trend of the macroscopic flux in the first cell of road 4 is plotted, suggests that, on average, improvements in the road conditions of the link enhance considerably the outgoing flux.

\section{Basic qualitative analysis} 
\label{sect:analysis}
In this section we study the well-posedness of the mathematical problems generated by the junction models introduced in Sections~\ref{sect:1-2_junction} and~\ref{sect:2-1_junction}, respectively.
In both cases we prove, under suitable assumptions, existence and uniqueness of the solution to the kinetic equations~\eqref{eq:single.road.eq} across the junction, as well as its continuous dependence on the data.

Let $X_{T}=C([0,\,T];\,\R^{n})$ be the Banach space of vector-valued continuous functions $t\mapsto \u(t)=(u_{1}(t),\,\dots,\,u_{n}(t)):[0,\,T]\to\R^{n}$. We define its closed subset
$$ \B=\left\{\u\in X_{T}\,:\, 0\leq u_j(t)\leq 1,\ \sum_{j=1}^n u_j(t)\leq 1,\ \forall\,j=1,\,\dots,\,n,\ \forall\,t\in[0,\,T]\right\}, $$
as well as the product space $\B^3=\B\times\B\times\B$, namely the set of functions $\U(t)=(\u^1(t),\u^2(t),\u^3(t))$ with $\u^r=(u^r_1,\,\dots,\,u^r_n)\in\B$ for $r=1,\,2,\,3$. We equip the latter with the $\infty$-norm:
$$ \norm{\U}_\infty:=\sup_{t\in[0,\,T]}\norm{\U(t)}_1 =
	\sup_{t\in[0,\,T]}\sum_{r=1}^{3}\norm{\u^r(t)}_1=\sup_{t\in[0,\,T]}\sum_{r=1}^{3}\sum_{j=1}^{n}\abs{u^r_{j}(t)}, $$
where $\norm{\cdot}_1$ is the classical 1-norm.

This is the general functional framework in which we will set our problems. Notice that the functions in $\B$ satisfy the non-negativity and boundedness required to the kinetic distribution functions $\f^r_i$ in every space cell $I_i$ of every road $r$. Thus the set $\B^3$ is the natural one where to look for solutions to the kinetic equations~\eqref{eq:single.road.eq} across junctions formed by three roads.

\subsection{Well-posedness of the 1-2 junction}
\label{sect:analysis_1-2}
\begin{figure}[!t]
\centering
\includegraphics[width=0.5\textwidth]{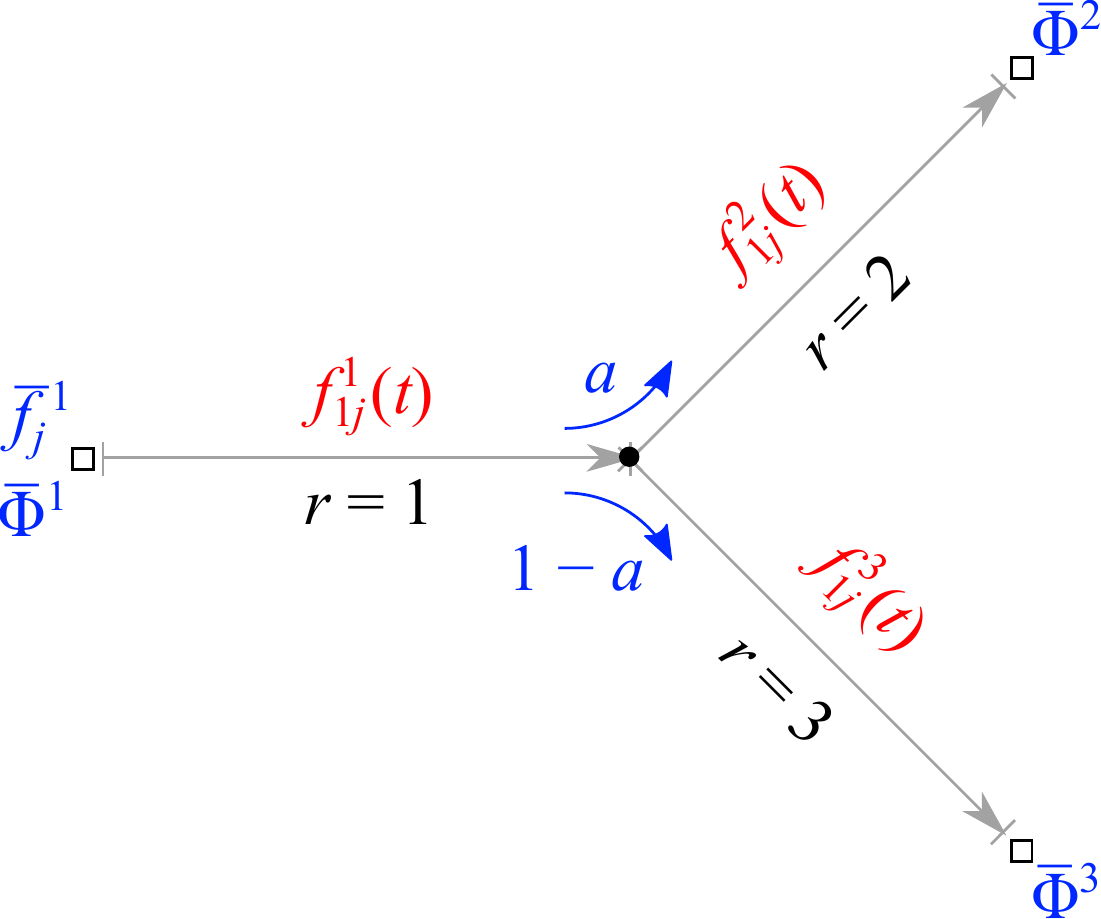}
\caption{Data (blue) and variables (red) for the problem of the 1-2 junction}
\label{fig:1-2_junction-analysis}
\end{figure}

Let us consider the 1-2 junction presented in Section~\ref{sect:1-2_junction} and for simplicity, but without loss of the generality, let us assume that each road consists of just one space cell, that is $m_r=1$ for $r=1,\,2,\,3$, see Fig.~\ref{fig:1-2_junction-analysis}. Introducing the symbol
$$ \deltac:=
	\begin{cases}
		0 & \text{if\ }	j=1 \\
		1 & \text{otherwise},
	\end{cases}
$$
which is a sort of ``complementary'' Kronecker delta, we can write compactly the transmission conditions~\eqref{eq:f0j_1-2},~\eqref{eq:Phi_1-2} as:
\begin{equation}
	f^2_{0j}=af^1_{1j}\deltac, \qquad f^3_{0j}=(1-a)f^1_{1j}\deltac,
		\qquad \Phi^1_{1,2}=a\Phi^2_{0,1}+(1-a)\Phi^3_{0,1}.
\label{eq:cond_1-2}
\end{equation}

As far as boundary conditions are concerned, we know from the general discussion set forth in Section~\ref{sect:basic} that we have to specify the distribution of incoming cars $\{f^1_{0j}\}_{j=1}^{n}$ along with the flux limiter $\Phi^1_{0,1}$ at the beginning of road $r=1$ and the flux limiters $\Phi^2_{1,2}$, $\Phi^3_{1,2}$ at the end of roads $r=2,\,3$. We set therefore:
\begin{equation}
	\begin{cases}
		f^1_{0j}=\bar{f}^1_j, & j=1,\,\dots,\,n \\
		\Phi^1_{0,1}=\bar{\Phi}^1 \\
		\Phi^r_{1,2}=\bar{\Phi}^r, & r=2,\,3,
	\end{cases}
\label{eq:bc_1-2}
\end{equation}
where the bar denotes prescribed (known) quantities. All the notations are summarized in Fig.~\ref{fig:1-2_junction-analysis} for quick reference.

Plugging Eqs.~\eqref{eq:cond_1-2},~\eqref{eq:bc_1-2} into Eq.~\eqref{eq:single.road.eq} gives the following problem for the 1-2 junction:
\begin{equation}
\begin{cases}
	\dfrac{df^1_{1j}}{dt}+v_j((a\Phi^2_{0,1}+(1-a)\Phi^3_{0,1})f^1_{1j}-\bar{\Phi}^1\bar{f}^1_j)
		=G_j[\f^1_1,\,\f^1_1]-f^1_{1j}L[\f^1_1] \\[3mm]
	\dfrac{df^2_{1j}}{dt}+v_j(\bar{\Phi}^2 f^2_{1j}-a\Phi^2_{0,1}f^1_{1j}\deltac)
		=G_j[\f^2_1,\,\f^2_1]-f^2_{1j}L[\f^2_1] \\[3mm]
	\dfrac{df^3_{1j}}{dt}+v_j(\bar{\Phi}^3 f^3_{1j}-(1-a)\Phi^3_{0,1}f^1_{1j}\deltac)
		=G_j[\f^3_1,\,\f^3_1]-f^3_{1j}L[\f^3_1] \\[3mm]
	f^r_{1j}(0)=\varphi^r_j \quad (r=1,\,2,\,3),
\end{cases}
\label{eq:IVP_1-2}
\end{equation}
where the gain and loss operators $G_j$, $L$ are defined in Eq.~\eqref{eq:GL} while the $\varphi^r_j$'s are the initial conditions, namely the prescribed kinetic distributions over all velocity classes for each road at time $t=0$. For analytical purposes it is convenient to rewrite Eq.~\eqref{eq:IVP_1-2} in mild form by formally integrating in time:
\begin{equation}
	\left\{
		\begin{aligned}
			f^1_{1j}(t)=\varphi^1_j+\ds{\int_0^t} & \left\{v_j(\bar{\Phi}^1(s)\bar{f}^1_j(s)-(a\Phi^2_{0,1}(s)+(1-a)\Phi^3_{0,1}(s))f^1_{1j}(s))\right. \\
				& \quad\left.+G_j[\f^1_1,\,\f^1_1](s)-f^1_{1j}(s)L[\f^1_1](s)\right\}\,ds \\
			f^2_{1j}(t)=\varphi^2_j+\ds{\int_0^t} & \left\{v_j(a\Phi^2_{0,1}(s)f^1_{1j}(s)\deltac-\bar{\Phi}^2(s)f^2_{1j}(s))\right. \\
				& \quad\left.+G_j[\f^2_1,\,\f^2_1](s)-f^2_{1j}(s)L[\f^2_1](s)\right\}\,ds \\
			f^3_{1j}(t)=\varphi^3_j+\ds{\int_0^t} & \left\{(v_j((1-a)\Phi^3_{0,1}(s)f^1_{1j}(s)\deltac-\bar{\Phi}^3(s)f^3_{1j}(s))\right. \\
				& \quad\left.+G_j[\f^3_1,\,\f^3_1](s)-f^3_{1j}(s)L[\f^3_1](s)\right\}\,ds.
		\end{aligned}
	\right.
	\label{eq:IVPmild_1-2}
\end{equation}

We are now in a position to define what we mean by kinetic solution across the 1-2 junction:
\begin{definition}
A distribution function
$$ \F_1(t)=(\f^1_1(t),\,\f^2_1(t),\,\f^3_1(t)):[0,\,T]\to\Rn\times\Rn\times\Rn, $$
where $\f^r_1=\{f^r_{1j}\}_{j=1}^{n}$, $r=1,\,2,\,3$, is said to be a \emph{mild kinetic solution across the 1-2 junction} if $\F_1\in\B^3$ and $\F_1$ satisfies Eq.~\eqref{eq:IVPmild_1-2}.
\label{def:mild.solution_1-2}
\end{definition}

Proving the existence and uniqueness of such a solution, along with its continuous dependence on the data, requires some technical assumptions which we summarize in the following:

\begin{framed}
\centering\textbf{Assumptions for Problem~\eqref{eq:IVP_1-2}}
\begin{enumerate}
\item \label{ip1} Initial and boundary data are such that $\bvarphi^r,\,\bar{\f}^1\in\B$, $r=1,\,2,\,3$.
\item \label{ip2} The function $\Phi:[0,\,1]^2\to [0,\,1]$ defining the flux limiters $\Phi^r_{0,1}=\Phi(\rho^r_0,\,\rho^r_1)$ for $r=2,\,3$ is such that:
\begin{enumerate}
\item[(i)] $0\leq\Phi(u,\,v)\leq 1$, $\forall\,u,\,v\in [0,\,1]$;
\item[(ii)] $\Phi(u,\,v)u\leq 1-v$, $\forall\,(u,\,v)\in [0,\,1]^2$ such that $u+v>1$;
\item[(iii)] $\exists\Lip(\Phi)>0$ such that 
	$$ \abs{\Phi(u_2,\,v_2)-\Phi(u_1,\,v_1)}\leq\Lip(\Phi)(\abs{u_2-u_1}+\abs{v_2-v_1}) $$
$\forall\,(u_1,\,v_1),\,(u_2,v_2)\in [0,\,1]^2$.
\end{enumerate}
\item \label{ip3} The flux limiter $\bar{\Phi}^1$ (boundary datum) at the entrance of the incoming road $r=1$ is like in Assumption~\ref{ip2}. In particular, $\bar{\Phi}^1=\Phi(\bar{\rho}^1,\,\rho^1_1)$ where $\bar{\rho}^1=\sum_{j=1}^{n}\bar{f}^1_{j}$ is the density of cars accessing road $r=1$ according to the boundary datum $\bar{\f}^1$.
\item \label{ip4} The elements $A^j_{hk}$ of the table of games satisfy property~\eqref{eq:tog.sum.1}. In addition, $\exists\Lip(A^j_{hk})>0$ such that
$$ \abs{A^j_{hk}[u]-A^j_{hk}[v]}\leq \Lip(A^j_{hk})\abs{u-v}, \quad \forall\,u,\,v\in[0,\,1]. $$
\end{enumerate}	
\end{framed}

\subsubsection{Uniqueness and continuous dependence}
We begin our analysis of the well-posedness of the problem of the 1-2 junction by an \emph{a priori} estimate, which entails the continuous dependence of the solution on the data and, consequently, its uniqueness for a given set of data. The implicit assumption is that (mild) solutions do actually exist, which will be proved later.
 
\begin{theorem}[Uniqueness and continuous dependence for the 1-2 junction]
Let $\{\{\bvarphi^r\}_{r=1}^{3},\,\bar{\f}^1,\,\{\bar{\Phi}^r_f\}_{r=1}^{3}\}$, $\{\{\bphi^r\}_{r=1}^{3},\,\bar{\g}^1,\,\{\bar{\Phi}^r_g\}_{r=1}^3\}$ be two sets of initial and boundary data for the 1-2 junction and $\F_1,\,\G_1\in\B^3$ two corresponding mild solutions of Problem~\eqref{eq:IVP_1-2}. Then there exists $\C>0$ such that
\begin{multline}
	\norm{\G_1-\F_1}_\infty\leq\C\left[\sum_{r=1}^{3}\norm{\bphi^r-\bvarphi^r}_1\right. \\
		\left.+\int_0^T\left(\sum_{r=1}^{3}\abs{\bar{\Phi}^r_g(t)-\bar{\Phi}^r_f(t)}+\norm{\bar{\g}^1(t)-\bar{\f}^1(t)}_1\right)dt\right].
	\label{eq:apriori_1-2}
\end{multline}
In particular, there is at most one solution corresponding to a given set of initial and boundary data.
\label{theo:uniqueness_1-2}
\end{theorem}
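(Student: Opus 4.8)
The plan is to work entirely from the mild formulation~\eqref{eq:IVPmild_1-2}, subtract the two integral identities satisfied by $\G_1$ and $\F_1$ componentwise, and then close a Grönwall inequality in the norm $\norm{\cdot}_1$. Fixing $t\in[0,\,T]$, for each road $r$ and speed class $j$ I would write $g^r_{1j}(t)-f^r_{1j}(t)$ as the difference of initial data $\phi^r_j-\varphi^r_j$ plus the time integral of the difference of the right-hand sides. Taking absolute values, summing over $j=1,\,\dots,\,n$ and $r=1,\,2,\,3$, and using the triangle inequality reduces the task to bounding, under the integral sign, the differences of four structural ingredients: the internal transport/flux-limiter terms, the boundary forcing $\bar{\Phi}^1\bar{f}^1_j$, the gain operator $G_j$, and the loss term $f^r_{1j}L$.

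Every product difference is handled by the add-and-subtract identity $AB-A'B'=A(B-B')+(A-A')B'$. For the flux-limiter terms I would use boundedness $0\leq\Phi\leq 1$ (Assumption~\ref{ip2}(i)), the bound $\abs{f^r_{1j}}\leq 1$ from $\F_1,\,\G_1\in\B$, and the Lipschitz estimate~\ref{ip2}(iii); since the internal ghost-cell densities are, up to the factors $a$, $1-a$, sums of the $f^1_{1j}$ and $\rho^r_1=\sum_j f^r_{1j}$, each density difference is controlled by $\norm{\g^r-\f^r}_1$, hence by $\norm{\G_1-\F_1}_1$. For the gain term $G_j[\g^r_1,\,\g^r_1]-G_j[\f^r_1,\,\f^r_1]$ I would expand $\eta_0\rho^r\sum_{h,k}A^j_{hk}[\rho^r](\cdot)_h(\cdot)_k$ and peel off, one at a time, the differences in the density prefactor, in the table of games (controlled by $\Lip(A^j_{hk})$ from Assumption~\ref{ip4}), and in the quadratic product $g_hg_k-f_hf_k=g_h(g_k-f_k)+(g_h-f_h)f_k$. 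Summing over $j$ and using nonnegativity of the $A^j_{hk}$ together with $\sum_j A^j_{hk}=1$ from~\eqref{eq:tog.sum.1} collapses the table-of-games weights and keeps the constant uniform. The loss term is cubic in the bounded quantities and is estimated in the same way, while the boundary forcing contributes exactly the data terms $\abs{\bar{\Phi}^1_g-\bar{\Phi}^1_f}$ and $\norm{\bar{\g}^1-\bar{\f}^1}_1$; the exit-flux-limiter differences $\abs{\bar{\Phi}^r_g-\bar{\Phi}^r_f}$, $r=2,\,3$, enter analogously through the $\bar{\Phi}^r f^r_{1j}$ terms in the second and third equations.

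Collecting all contributions yields an inequality of the form
$$ \norm{\G_1(t)-\F_1(t)}_1\leq D+\kappa\int_0^t\norm{\G_1(s)-\F_1(s)}_1\,ds, $$
where $\kappa>0$ depends only on $\eta_0$, the $v_j$, $\Lip(\Phi)$, the $\Lip(A^j_{hk})$, $n$, and $a$, and $D$ collects the data terms, namely $\sum_{r=1}^3\norm{\bphi^r-\bvarphi^r}_1$ plus the time integral over $[0,\,T]$ of $\sum_{r=1}^3\abs{\bar{\Phi}^r_g-\bar{\Phi}^r_f}+\norm{\bar{\g}^1-\bar{\f}^1}_1$. Grönwall's lemma then gives $\norm{\G_1(t)-\F_1(t)}_1\leq De^{\kappa t}\leq De^{\kappa T}$ for all $t$; taking the supremum over $t\in[0,\,T]$ produces~\eqref{eq:apriori_1-2} with $\C=e^{\kappa T}$, and uniqueness follows by setting the two data sets equal, so that $D=0$. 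The main obstacle is essentially bookkeeping: organizing the gain-operator estimate so that each piece reduces to a multiple of $\norm{\G_1-\F_1}_1$ with a constant that does not secretly depend on the solutions beyond their membership in $\B$. The density-dependent coupling through the internal flux limiters $\Phi^2_{0,1}$, $\Phi^3_{0,1}$ (which link the first equation to the second and third) is where one must be most careful, but the uniform bounds furnished by $\B$ together with the Lipschitz hypotheses make every term tractable.
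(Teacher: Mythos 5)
Your proposal is correct and follows essentially the same route as the paper: subtract the mild formulations, estimate each product difference by add-and-subtract using the uniform bounds from $\B$ together with the Lipschitz hypotheses on $\Phi$ and the $A^j_{hk}$, control the ghost-cell densities $\rho^2_0,\,\rho^3_0$ via the transmission conditions in terms of $\f^1_1$, and close with Gr\"onwall. The only difference is cosmetic: the paper delegates the gain/loss estimates to a lemma from the single-road reference, whereas you carry them out explicitly.
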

\begin{proof}
Subtracting term by term the mild equations~\eqref{eq:IVPmild_1-2} satisfied by $\F_1$ and $\G_1$, taking into account that $a,\,(1-a),\,v_j,\deltac\leq 1$, and summing over $j=1,\,\dots,\,n$, we have
\begin{equation}
	\left\{
		\begin{aligned}
			\norm{\g^1_1(t)-\f^1_1(t)}_1\leq\norm{\bphi^1-\bvarphi^1}_1 & +\sum_{j=1}^n\ds{\int_0^t}\abs{\bar{\Phi}^1_g(s)\bar{g}^1_j(s)-\bar{\Phi}^1_f(s)\bar{f}^1_j(s)}\,ds \\ 
				& \quad+\scrF^2(t)+\scrF^3(t)+\scrG^1(t)+\scrL^1(t) \\
			\norm{\g^2_1(t)-\f^2_1(t)}_1\leq\norm{\bphi^2-\bvarphi^2}_1 & +\sum_{j=1}^n\ds{\int_0^t}\abs{\bar{\Phi}^2_f(s)f^2_{1j}(s)-\bar{\Phi}^2_g(s)g^2_{1j}(s)}\,ds \\
				& \quad+\scrF^2(t)+\scrG^2(t)+\scrL^2(t) \\
			\norm{\g^3_1(t)-\f^3_1(t)}_1\leq\norm{\bphi^3-\bvarphi^3}_1 & +\sum_{j=1}^n\ds{\int_0^t}\abs{\bar{\Phi}^3_f(s)f^3_{1j}(s)-\bar{\Phi}^3_g(s)g^3_{1j}(s)}\,ds \\
				& \quad+\scrF^3(t)+\scrG^3(t)+\scrL^3(t),
		\end{aligned}
	\right.
	\label{eq:FGL_1-2}
\end{equation}
where we have set
\begin{align*}
	\scrF^r(t) &:= \sum_{j=1}^n\int_0^t\abs{\Phi(\rho^r_0,\,\rho^r_1)(s)f^1_{1j}(s)-\Phi(\varrho^r_0,\,\varrho^r_1)(s)g^1_{1j}(s)}\,ds, \\
	\scrG^r(t) &:= \sum_{j=1}^n\int_0^t\abs{G_j[\g^r_1,\,\g^r_1](s)-G_j[\f^r_1,\,\f^r_1](s)}\,ds, \\
	\scrL^r(t) &:= \sum_{j=1}^n\int_0^t\abs{f^r_{1j}(s)L[\f^r_1](s)-g^r_{1j}(s)L[\g^r_1](s)}\,ds
\end{align*}
for $r=1,\,2,\,3$. Notice that in the term $\scrF^r$ we have occasionally written explicitly the dependence of the flux limiter on the densities for the sake of the next estimates. In particular, we agree that the symbol $\rho^r_i$ stands for the density on the $r$th road computed with respect to the distribution function $\f^r_i$ while $\varrho^r_i$ stands for that computed with respect to $\g^r_i$.

Now, using Assumptions~\ref{ip1}--\ref{ip3}, by standard calculations we get the following estimates for the second terms at the right-hand side of Eqs.~\eqref{eq:FGL_1-2}:
\begin{align*}
	\abs{\bar{\Phi}^1_g(s)\bar{g}^1_j(s)-\bar{\Phi}^1_f(s)\bar{f}^1_j(s)} &\leq \abs{\bar{\Phi}^1_g(s)-\bar{\Phi}^1_f(s)}+\abs{\bar{g}^1_j(s)-\bar{f}^1_j(s)} \\
	\abs{\bar{\Phi}^r_f(s)f^r_{1j}(s)-\bar{\Phi}^r_g(s)g^r_{1j}(s)} &\leq \abs{\bar{\Phi}^r_f(s)-\bar{\Phi}^r_g(s)}+\abs{f^r_{1j}(s)-g^r_{1j}(s)}.
\end{align*}
Moreover, owing to Assumptions~\ref{ip1},~\ref{ip2}, we also discover for the $\scrF^r$'s:
\begin{align*}
	& \abs{\Phi(\rho^r_0,\,\rho^r_1)(s)f^1_{1j}(s)-\Phi(\varrho^r_0,\,\varrho^r_1)(s)g^1_{1j}(s)} \\
		&\qquad\qquad\qquad \leq\abs{\Phi(\rho^r_0,\,\rho^r_1)(s)-\Phi(\varrho^r_0,\,\varrho^r_1)(s)}+\abs{f^1_{1j}(s)-g^1_{1j}(s)} \\ 
		&\qquad\qquad\qquad =\abs{\Phi(\rho^r_0,\,\rho^r_1)-\Phi(\varrho^r_0,\,\varrho^r_1)}+\abs{f^1_{1j}(s)-g^1_{1j}(s)} \\ 
		&\qquad\qquad\qquad \leq\C\left(\abs{\rho^r_0-\varrho^r_0}+\abs{\rho^r_1-\varrho^r_1}\right)+\abs{f^1_{1j}(s)-g^1_{1j}(s)} \\
	\intertext{whence, using the transmission conditions~\eqref{eq:cond_1-2} for handling $\rho^r_0$ and $\varrho^r_0$,}
		&\qquad\qquad\qquad \leq\C(\norm{\g^r_1(s)-\f^r_1(s)}_1+\norm{\g^1_1(s)-\f^1_1(s)}_1)
\end{align*}
where $\C$ is a positive constant whose specific value is unimportant (it can even change from line to line). Furthermore, concerning the terms $\scrG^r$, $\scrL^r$, it results (cf. the proof of Theorem 5.2 in~\cite{fermo2013SIAP}):
\begin{equation*}
	\left.
	\begin{array}{r}
		\ds{\sum_{j=1}^n}\abs{G_j[\g^r_1,\,\g^r_1](s)-G_j[\f^r_1,\,\f^r_1](s)} \\
		\ds{\sum_{j=1}^n}\abs{g^r_{1j}L[\g^r_1](s)-f^r_{1j}L[\f^r_1](s)}
	\end{array}
	\right\}
	\leq\C\norm{\g^r(s)-\f^r_1(s)}_1.
\end{equation*}

Finally, collecting all of the estimates obtained so far and summing term by term Eqs.~\eqref{eq:FGL_1-2} we arrive at
\begin{multline*}
	\norm{\G_1(t)-\F_1(t)}_1\leq\sum_{r=1}^{3}\norm{\bphi^r-\bvarphi^r}_1 \\
		+\C\int_0^t\left(\norm{\G_1(s)-\F_1(s)}_1+\sum_{r=1}^{3}\abs{\bar{\Phi}^r_g(s)-\bar{\Phi}^r_f(s)}+\norm{\bar{\g}^1(s)-\bar{\f}^1(s)}_1\right)\,ds
\end{multline*}
whence, owing to Gronwall's inequality,
\begin{multline*}
	\norm{\G_1(t)-\F_1(t)}_1\leq e^{\C t}\left[\sum_{r=1}^{3}\norm{\bphi^r-\bvarphi^r}_1\right. \\
		\left. +\int_0^t\left(\sum_{r=1}^{3}\abs{\bar{\Phi}^r_g(s)-\bar{\Phi}^r_f(s)}+\norm{\bar{\g}^1(s)-\bar{\f}^1(s)}_1\right)\,ds\right].
\end{multline*}

Taking the supremum over $t\in [0,\,T]$ of both sides we obtain the first assertion of the theorem, namely the continuous dependence estimate. Uniqueness of the solution follows straightforwardly by taking $\bvarphi^r=\bphi^r$, $\bar{\Phi}^r_f=\bar{\Phi}^r_g$ for $r=1,\,2,\,3$, and $\bar{\f}^1=\bar{\g}^1$.
\end{proof}

\subsubsection{Existence}
The next theorem states that the unique mild kinetic solution across the 1-2 junction does indeed exist in $\B^3$.

\begin{theorem}[Existence for the 1-2 junction]
There exists $\F_1\in\B^3$ which is a mild solution to Problem~\eqref{eq:IVP_1-2} in the sense of Definition~\ref{def:mild.solution_1-2}.
\label{theo:existence_1-2}
\end{theorem}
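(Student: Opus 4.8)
The plan is to set up a fixed-point equation $\F_1=\mathcal{T}[\F_1]$ for the mild system \eqref{eq:IVPmild_1-2} and solve it by the contraction mapping theorem, then use the invariant-region structure of the model to confine the solution to $\B^3$. Concretely, given $\U=(\u^1,\u^2,\u^3)\in X_T^3$ I would let $\mathcal{T}[\U]$ be the triple of integral right-hand sides of \eqref{eq:IVPmild_1-2}, where the flux limiters are reconstructed from $\U$ through the transmission relations \eqref{eq:cond_1-2} and $\Phi^r_{0,1}=\Phi(\rho^r_0,\rho^r_1)$. Since $\Phi$ and the table of games are prescribed only on $[0,1]$ (Assumptions \ref{ip2}, \ref{ip4}), I would first extend them to globally Lipschitz maps on $\R$ preserving the bounds $0\le\Phi\le1$ and the inequality in Assumption \ref{ip2}(ii), so that $\mathcal{T}$ is well defined on all of $X_T^3$; this extension is immaterial a posteriori because the solution will turn out to take values where the original and extended data coincide. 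Each integrand is a product of functions bounded on $X_T^3$, so $t\mapsto\mathcal{T}[\U](t)$ is Lipschitz in $t$ and $\mathcal{T}$ maps $X_T^3$ into itself.

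For the contraction I would reuse verbatim the estimates established in the proof of Theorem \ref{theo:uniqueness_1-2}: the Lipschitz continuity of $\Phi$ (Assumption \ref{ip2}(iii)) and of the table of games (Assumption \ref{ip4}), together with the bounds on $G_j$ and $L$, give $\norm{\mathcal{T}[\U]-\mathcal{T}[\U']}_\infty\le\C\,T\,\norm{\U-\U'}_\infty$ for a structural constant $\C$. Rather than restricting to a short interval, I would endow $X_T^3$ with the equivalent weighted norm $\sup_{t\in[0,T]}e^{-\lambda t}\norm{\cdot}_1$ and observe that for $\lambda$ large enough the same computation yields a contraction for arbitrary $T$; Banach's theorem then produces a unique fixed point $\F_1\in X_T^3$ satisfying \eqref{eq:IVPmild_1-2}.

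The crux, and the step I expect to be the main obstacle, is the a priori invariant-region estimate showing that this fixed point actually lies in $\B^3$. I would prove non-negativity together with the constraints $u^r_j\le1$ and $\sum_j u^r_j\le1$ by a continuity (barrier) argument on the closed set $\B^3$, exploiting that for a \emph{fixed point} the data feeding the operator coincide with the solution itself, which is exactly what closes the relevant differential inequalities. Non-negativity follows from writing each scalar equation in integrating-factor form: the outflow and loss terms are proportional to the component itself, so its evolution reads as the non-negative initial datum times a factor in $(0,1]$ plus the time integral of manifestly non-negative gain and inflow sources. For the upper density bound I would sum each equation over $j=1,\dots,n$, use the conservativeness \eqref{eq:conservativeness} to cancel the gain and loss contributions, and reduce to a pure transport balance for $\rho^r$; the flux-limiter inequality of Assumption \ref{ip2}(ii) then bounds the inflow flux into each cell by $1$ minus that cell's density, so at any instant at which a density reaches $1$ its time derivative is non-positive and the value $1$ cannot be crossed. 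The pointwise bound $u^r_j\le1$ is then automatic from non-negativity together with $\sum_j u^r_j\le1$. On $\B^3$ the extended and original coefficients agree, hence $\F_1$ solves the genuine problem \eqref{eq:IVP_1-2} and is a mild solution in the sense of Definition \ref{def:mild.solution_1-2}; its uniqueness is already provided by Theorem \ref{theo:uniqueness_1-2}.
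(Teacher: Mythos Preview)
Your overall strategy---Banach fixed point for the mild system followed by an invariant-region confinement---is a legitimate alternative to what the paper does. The paper instead builds the solution constructively: it discretizes time by an explicit Euler scheme~\eqref{eq:discrete.time_1-2}, proves by induction that the iterates stay in $\B$ for $\Delta t$ small (this is where the invariant-region estimates appear, at the discrete level), interpolates piecewise linearly, extracts a limit by Ascoli--Arzel\`a, and finally verifies that the limit satisfies~\eqref{eq:IVPmild_1-2} by estimating the residuals. So the paper's route is compactness-based rather than contraction-based; your approach is shorter and more standard for ODEs, while the paper's is more constructive and doubles as a convergence proof for the natural numerical scheme.

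There is, however, a genuine gap in your contraction step. Extending $\Phi$ and the $A^j_{hk}$ to globally Lipschitz maps on $\R$ does \emph{not} make $\mathcal{T}$ globally Lipschitz on $X_T^3$, because the gain and loss operators
\[
G_j[\f,\f]=\eta_0\rho\sum_{h,k}A^j_{hk}[\rho]f_hf_k,\qquad fL[\f]=\eta_0\rho^2 f
\]
are cubic polynomials in the components of $\f$. The estimates you invoke from Theorem~\ref{theo:uniqueness_1-2} explicitly use $\F_1,\G_1\in\B^3$ to bound these products by $1$; on all of $X_T^3$ the Lipschitz constant of $\mathcal{T}$ depends on $\norm{\U}_\infty,\norm{\U'}_\infty$, and neither the weighted-norm trick nor a short-time restriction yields a self-map of a fixed ball without further argument. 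The easy repair is to truncate the \emph{unknowns} as well: replace each $f^r_{1j}$ entering $G_j$, $L$ and the flux-limiter arguments by $\min\{\max\{f^r_{1j},0\},1\}$; the modified right-hand side is then globally bounded and globally Lipschitz, your contraction and invariant-region arguments go through unchanged, and on $\B^3$ the truncated and original problems coincide. Alternatively, do local existence on a ball followed by continuation using the a priori confinement to $\B^3$. Your barrier arguments for non-negativity and for $\rho^r_1\le 1$ are correct as stated (the key inequality $\Phi(u,v)u\le 1-v$ holds for all $(u,v)\in[0,1]^2$, not just when $u+v>1$, since $\Phi\le 1$), so once the contraction issue is fixed the proof closes.
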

\begin{proof}
The proof is organized in three steps according to the same guidelines as the existence theorem proved in~\cite{fermo2013SIAP}. However, for the reader's convenience, we repropose the relevant material from~\cite{fermo2013SIAP}, proving especially the different arguments, so as to make our exposition self-contained.

\paragraph*{Step 1: Discrete-in-time model}
We first consider the model at discrete time instants $t^\kk_\n=\n\Delta{t}_\kk$, where the index $\n$ labels the discrete time, the index $\kk$ is a mesh parameter denoting the level of refinement of the time grid, and the time step $\Delta{t}_\kk$ is chosen such that it tends to zero for $\kk\to\infty$:
\begin{equation}
	\left\{
		\begin{aligned}[c]
			f^{1,\n+1,\kk}_{1j} &= f^{1,\n,\kk}_{1j}-\Delta{t}_\kk v_j((a\Phi^{2,\n,\kk}_{0,1}+(1-a)\Phi^{3,\n,\kk}_{0,1})f^{1,\n,\kk}_{1j}-
				\bar{\Phi}^{1,\n,\kk} \bar{f}^{1,\n,\kk}_j) \\
					&\phantom{=} +\Delta{t}_\kk(G_j[\f^{1,\n,\kk}_1,\,\f^{1,\n,\kk}_1]-f^{1,\n,\kk}_{1j}L[\f^{1,\n,\kk}_1]) \\[2mm]
			f^{2,\n+1,\kk}_{1j} &= f^{2,\n,\kk}_{1j}-\Delta{t}_\kk v_j(\bar{\Phi}^{2,\n,\kk}f^{2,\n,\kk}_{1j}-a\Phi^{2,\n,\kk}_{0,1}f^{1,\n,\kk}_{1j}\deltac) \\
				&\phantom{=} +\Delta{t}_\kk(G_j[\f^{2,\n,\kk}_1,\,\f^{2,\n,\kk}_1]-f^{2,\n,\kk}_{1j}L[\f^{2,\n,\kk}_1]) \\[2mm]
			f^{3,\n+1,\kk}_{1j} &= f^{3,\n,\kk}_{1j}-\Delta{t}_\kk v_j(\bar{\Phi}^{3,\n,\kk} f^{3,\n,\kk}_{1j}-(1-a)\Phi^{3,\n,\kk}_{0,1}f^{1,\n,\kk}_{1j}\deltac) \\
				&\phantom{=} +\Delta{t}_\kk(G_j[\f^{3,\n,\kk}_1,\,\f^{3,\n,\kk}_1]-f^{3,\n,\kk}_{1j}L[\f^{3,\n,\kk}_1]).
		\end{aligned}
	\right.
	\label{eq:discrete.time_1-2}
\end{equation}

We claim that if $\Delta{t}_\kk$ is sufficiently small the iterates $\f^{r,\n,\kk}_1$ (understood as constant functions of $t$) belong to $\B$ for all $\n,\,\kk\geq 0$ and all $r=1,\,2,\,3$. This can be proved by induction from Eq.~\eqref{eq:discrete.time_1-2}, assuming that $\f^{r,\n,\kk}_1\in\B$ and checking that $\f^{r,\n+1,\kk}_1\in\B$ as well. More specifically:
\begin{enumerate}
\item \emph{Non-negativity of the iterates.} Taking into account the boundedness between $0$ and $1$ of the flux limiters, the boundary data, the speed classes, the distribution function $\f^{r,\n,\kk}_1$ itself (by inductive assumption), and the non-negativity of $G_j[\f^{r,\n,\kk}_1,\,\f^{r,\n,\kk}_1]$ we easily see from Eq.~\eqref{eq:discrete.time_1-2} that
$$ f^{r,\n+1,\kk}_{1j}\geq f^{r,\n,\kk}_{1j}-\Delta{t}_\kk(1+f^{r,\n,\kk}_{1j}L[\f^{r,\n,\kk}_1]), \quad \forall\,r=1,\,2,\,3. $$
Moreover, since $L[\f^{r,\n,\kk}_1]=\eta_0{(\rho^{r,\n,\kk}_1)}^2\leq\eta_0$ (because the inductive assumption $\f^{r,\n,\kk}_1\in\B$ implies in particular $\rho^{r,\n,\kk}_1=\sum_{j=1}^{n}f^{r,\n,\kk}_{1j}\leq 1$) we deduce
$$ f^{r,\n+1,\kk}_{1j}\geq f^{r,\n,\kk}_{1j}(1-\Delta{t}_\kk(1+\eta_0)), \quad \forall\,r=1,\,2,\,3 $$
whence if $\Delta{t}_\kk<1/(1+\eta_0)$ we conclude that $f^{r,\n+1,\kk}_{1j}\geq 0$.

\item \emph{Boundedness of the iterates.} Since $v_j\leq 1$ for all $j$, we have
\begin{equation}
	\left\{
	\begin{aligned}[c]
		f^{1,\n+1,\kk}_{1j} &\leq f^{1,\n,\kk}_{1j}+\Delta{t}_\kk\left(\bar{\Phi}^{1,\n,\kk}\bar{f}^{1,\n,\kk}_j\right. \\
			&\phantom{\leq} \left.+G_j[\f^{1,\n,\kk}_1,\,\f^{1,\n,\kk}_1]-f^{1,\n,\kk}_{1j}L[\f^{1,\n,\kk}]\right) \\[2mm]
		f^{2,\n+1,\kk}_{1j} &\leq f^{2,\n,\kk}_{1j}+\Delta{t}_\kk\left(a\Phi^{2,\n,\kk}_{0,1}f^{1,\n,\kk}_{1j}\deltac\right. \\
			&\phantom{\leq} \left.+G_j[\f^{2,\n,\kk}_1,\,\f^{2,\n,\kk}_1]-f^{2,\n,\kk}_{1j}L[\f^{2,\n,\kk}_1]\right) \\[2mm]
		f^{3,\n+1,\kk}_{1j} &\leq f^{3,\n,\kk}_{1j}+\Delta{t}_\kk\left((1-a)\Phi^{3,\n,\kk}_{0,1}f^{1,\n,\kk}_{1j}\deltac\right. \\
			&\phantom{\leq} \left.+G_j[\f^{3,\n,\kk}_1,\,\f^{3,\n,\kk}_1]-f^{3,\n,\kk}_{1j}L[\f^{3,\n,\kk}_1]\right).
	\end{aligned}
	\right.
	\label{eq:boundedness}
\end{equation}
But
$$ G_j[\f^{r,\n,\kk}_1,\,\f^{r,\n,\kk}_1]-f^{r,\n,\kk}_{1j}L[\f^{r,\n,\kk}_1]\leq\eta_0\left[1-{(f^{r,\n,\kk}_{1j})}^2\right] $$
whereas owing to Assumptions~\ref{ip2},~\ref{ip3} we can write
\begin{align}
	\begin{aligned}[b]
	\bar{\Phi}^{1,\n,\kk}\bar{f}^{1,\n,\kk}_j &= \Phi(\bar{\rho}^{1,\n,\kk},\,\rho^{1,\n,\kk}_1)\bar{f}^{1,\n,\kk}_j
		\leq\Phi(\bar{\rho}^{1,\n,\kk},\,\rho^{1,\n,\kk}_1)\bar{\rho}^{1,\n,\kk} \\
	& \leq 1-\rho^{1,\n,\kk}_1\leq 1-f^{1,\n,\kk}_{1j},
	\end{aligned}
	\label{eq:bound1}
\end{align}
and likewise, recalling also the transmission conditions~\eqref{eq:cond_1-2},
\begin{align}
	\begin{aligned}[b]
	a\Phi^{2,\n,\kk}_{0,1}f^{1,\n,\kk}_{1j}\deltac &\leq a\Phi\left(a\sum_{j=2}^{n}f^{1,\n,\kk}_{1j},\,\rho^{2,\n,\kk}_1\right)\sum_{j=2}^{n}f^{1,\n,\kk}_{1j} \\
		&\leq 1-\rho^{2,\n,\kk}_1\leq 1-f^{2,\n,\kk}_{1j}
	\end{aligned}
	\label{eq:bound2-1}
\end{align}
and
\begin{align}
	\begin{aligned}[b]
	(1-a)\Phi^{3,\n,\kk}_{0,1}f^{1,\n,\kk}_{1j}\deltac &\leq (1-a)\Phi\left((1-a)\sum_{j=2}^{n}f^{1,\n,\kk}_{1j},\,\rho^{3,\n,\kk}_1\right)\sum_{j=2}^{n}f^{1,\n,\kk}_{1j} \\
		&\leq 1-\rho^{3,\n,\kk}_1\leq 1-f^{3,\n,\kk}_{1j}.
	\end{aligned}
	\label{eq:bound2-2}
\end{align}
Consequently, for all $r=1,\,2,\,3$ it results
\begin{align*}
	f^{r,\n+1,\kk}_{1j} &\leq f^{r,\n,\kk}_{1j}+\Delta{t}_\kk\left((1-f^{r,\n,\kk}_{1j})+\eta_0(1-{(f^{r,\n,\kk}_{1j})}^2\right) \\
		&\leq f^{r,\n,\kk}_{1j}+\Delta{t}_\kk\left((1-f^{r,\n,\kk}_{1j})+2\eta_0(1-f^{r,\n,\kk}_{1j})\right) \\
		&= f^{r,\n,\kk}_{1j}+\Delta{t}_\kk(1+2\eta_0)(1-f^{r,\n,\kk}_{1j}),
\end{align*}
thus choosing $\Delta{t}_\kk<1/(1+2\eta_0)$ yields $f^{r,\n+1,\kk}_{1j}\leq 1$.

\item \emph{Boundedness of the sum of the iterates.} By Eq.~\eqref{eq:boundedness} and taking furthermore Eqs.~\eqref{eq:conservativeness},~\eqref{eq:bound1},~\eqref{eq:bound2-1},~\eqref{eq:bound2-2} into account we have
$$ \sum_{j=1}^{n}f^{r,\n+1,\kk}_{1j}\leq\rho^{r,\n,\kk}_1+\Delta{t}_\kk(1-\rho^{r,\n,\kk}_1)\leq 1 $$
provided $\Delta{t}_\kk \leq 1$.
\end{enumerate}
Finally, the three properties hold simultaneously if $\Delta{t}_\kk<1/(1+2\eta_0)$.

\paragraph*{Step 2: From discrete to continuous time} 
Now we pass from discrete to continuous time. To this end, we introduce the function $\hat{\f}^{r,\kk}_1:[0,\,T]\to\R^{n}$ interpolating piecewise linearly in time the iterates $\{\f^{r,\n,\kk}_1\}_{\n=1}^{N_\kk}$:
\begin{equation}
	\hat{\f}^{r,\kk}_1(t)=\sum_{\n=1}^{N_\kk}\left[\left(1-\frac{t-t^\kk_{\n-1}}{\Delta{t}_\kk}\right)\f^{r,\n-1,\kk}_1
		+\frac{t-t^\kk_{\n-1}}{\Delta{t}_\kk}\f^{r,\n,\kk}_1\right]\ind_{[t^\kk_{\n-1},\,t^\kk_\n]}(t),
	\label{eq:interpolante}
\end{equation}
where $N_\kk$ is the total number of time steps. We assume that $N_\kk$ and $\Delta{t}_\kk$ are chosen in such a way that $N_\kk\Delta{t}_\kk=T$ independently of the refinement parameter $\kk$. Then an analysis similar to that performed in the proof of Lemma 5.4 in~\cite{fermo2013SIAP}, which uses the properties of the $\f^{r,\n,\kk}_1$'s proved in the previous Step 1 along with Ascoli-Arzel\`{a} compactness criterion in $X_T$, shows that, in the limit $\kk\to\infty$, the interpolation~\eqref{eq:interpolante} converges to a distribution function $\f^r_1\in\B$, i.e.,
\begin{equation}
	\lim_{\kk\to\infty} \norm{\hat{\f}^{r,\kk}_1-\f^r_1}_\infty=0
	\label{eq:limit_k}
\end{equation}
for all $r=1,\,2,\,3$.

\paragraph*{Step 3: Construction of the solution}
Finally, we prove that the function $\F_1=(\f^1_1,\,\f^2_1,\,\f^3_1)$, where the $\f^r_1$'s are those appearing in Eq.~\eqref{eq:limit_k}, is a mild solution to Problem~\eqref{eq:IVP_1-2}. Owing to Eq.~\eqref{eq:limit_k}, we can expect the $\hat{\f}^{r,\kk}_1$'s defined by Eq.~\eqref{eq:interpolante} to be an approximation of $\f^r_1$ for every fixed $\kk$. Therefore, plugging the components $\hat{f}^{r,\kk}_{1j}$ into Eq.~\eqref{eq:IVPmild_1-2} produces some reminders at the right-hand side that we write as follows:
\begin{equation}
	\left\{
	\begin{aligned}[c]
	\hat{f}^{1,\kk}_{1j}(t)-\varphi^1_j+\int_0^t &\left\{v_j\left((a\Phi^{2,\kk}_{0,1}(s)+(1-a)\Phi^{3,\kk}_{0,1}(s))\hat{f}^{1,\kk}_{1j}(s)
		-\bar{\Phi}^{1,\kk}(s)\bar{f}^{1,\kk}_j(s))\right)\right. \\ 
		&\left. -G_j[\hat{\f}^{1,\kk}_1,\,\hat{\f}^{1,\kk}_1](s)+\hat{f}^{1,\kk}_{1j}(s)L[\hat{\f}^{1,\kk}_1](s)\right\}\,ds=\int_0^t e^{1,\kk}_{1j}(s)\,ds \\
	\hat{f}^{2,\kk}_{1j}(t)-\varphi^2_j+\int_0^t &\left\{v_j\left(\bar{\Phi}^{2,\kk}(s)\hat{f}^{2,\kk}_{1j}(s)-a\Phi^{2,\kk}_{0,1}(s)\hat{f}^{1,\kk}_{1j}(s)\deltac\right)\right. \\
		& \left. -G_j[\hat{\f}^{2,\kk}_1,\,\hat{\f}^{2,\kk}_1](s)+\hat{f}^{2,\kk}_{1j}(s)L[\hat{\f}^{2,\kk}_1](s)\right\}\,ds=\int_0^t e^{2,\kk}_{1j}(s)\,ds \\
	\hat{f}^{3,\kk}_{1j}(t)-\varphi^3_j+\int_0^t &\left\{v_j\left(\bar{\Phi}^{3,\kk}(s)\hat{f}^{3,\kk}_{1j}(s)-(1-a)\Phi^{3,\kk}_{0,1}(s)\hat{f}^{1,\kk}_{1j}(s)\deltac\right)\right. \\
		& \left. -G_j[\hat{\f}^{3,\kk}_1,\,\hat{\f}^{3,\kk}_1](s)+\hat{f}^{3,\kk}_{1j}(s)L[\hat{\f}^{3,\kk}_1](s)\right\}\,ds=\int_0^t e^{3,\kk}_{1j}(s)\,ds
	\end{aligned}
	\label{eq:int_error}
	\right.
\end{equation}
where we have denoted $\Phi^{2,\kk}_{0,1}:=\Phi(\hat{\rho}^{2,\kk}_0,\,\hat{\rho}^{2,\kk}_1)$ and $\Phi^{3,\kk}_{0,1}:=\Phi(\hat{\rho}^{3,\kk}_0,\,\hat{\rho}^{3,\kk}_1)$.

The idea is now to take the limit $\kk\to\infty$ in Eq.~\eqref{eq:int_error} by exploiting Eq.~\eqref{eq:limit_k}, showing that the left-hand sides converge to the corresponding expressions evaluated for $f^r_{1j}(t)$ while the right-hand sides go to zero.

First, let us consider the left-hand sides of Eq.~\eqref{eq:int_error}. Note that all terms appearing in the integrals are bounded from above by an integrable constant, thus by dominated convergence it is possible to commute the limit in $\kk$ with the integral in $t$. Next, using Eq.~\eqref{eq:limit_k} and taking Eq.~\eqref{eq:apriori_1-2} into account with $\G_1=\hat{\F}^\kk_1$ we conclude that the left-hand sides of Eq.~\eqref{eq:int_error} converge to the analogous expression with $\F_1$.

At this point it remains to prove that the right-hand sides of Eq.~\eqref{eq:int_error} tend to zero. To this end, we take the time derivative of both sides using the fact that the $\hat{\f}^{r,\kk}_1$'s are Lipschitz continuous functions of $t$ (cf. Eq.~\eqref{eq:interpolante}), hence almost everywhere differentiable owing to Rademacher's Theorem:
\begin{equation}
	\left\{
	\begin{aligned}[c]
		\dfrac{d\hat{f}^1_{1j}}{dt}+v_j &\left((a\Phi^{2,\kk}_{0,1}(t)+(1-a)\Phi^{3,\kk}_{0,1}(t))\hat{f}^{1,\kk}_{1j}(t)
			-\bar{\Phi}^1\bar{f}^{1,\kk}_j(t)\right) \\
				&-G_j[\hat{\f}^{1,\kk}_1,\,\hat{\f}^{1,\kk}_1](t)+\hat{f}^{1,\kk}_{1j}(t)L[\hat{\f}^{1,\kk}_1](t)=e^{1,\kk}_{1j}(t) \\[2mm]
		\dfrac{d\hat{f}^2_{1j}}{dt}+v_j &\left(\bar{\Phi}^2(t)\hat{f}^{2,\kk}_{1j}(t)-a\Phi^{2,\kk}_{0,1}(t)\hat{f}^{1,\kk}_{1j}(t)\deltac\right) \\
			& -G_j[\hat{\f}^{2,\kk}_1,\,\hat{\f}^{2,\kk}_1](t)+\hat{f}^{2,\kk}_{1j}(t)L[\hat{\f}^{2,\kk}_1](t)=e^{2,\kk}_{1j}(t) \\[2mm]
		\dfrac{d\hat{f}^3_{1j}}{dt}+v_j &\left(\bar{\Phi}^3(t)\hat{f}^{3,\kk}_{1j}(t)-(1-a)\Phi^{3,\kk}_{0,1}(t)\hat{f}^{1,\kk}_{1j}(t)\deltac\right) \\
			& -G_j[\hat{\f}^{3,\kk}_1,\,\hat{\f}^{3,\kk}_1](t)+\hat{f}^{3,\kk}_{1j}(t)L[\hat{\f}^{3,\kk}_1](t)=e^{3,\kk}_{1j}(t).
	\end{aligned}
	\right.
	\label{eq:IVP-interpolante}
\end{equation}

From Eq.~\eqref{eq:interpolante} it results
$$ \dfrac{d\hat{f}^{r,\kk}_{1j}}{dt}=\dfrac{1}{\Delta t_\kk}\sum_{\n=1}^{N_\kk}\left(f^{r,\n,\kk}_{1j}-f^{r,\n-1,\kk}_{1j}\right)\ind_{[t^\kk_{\n-1},\,t^\kk_\n]}(t) $$
whence, using Eq.~\eqref{eq:discrete.time_1-2} to manipulate the differences in brackets, we get:
\begin{itemize}
\item $
	\begin{aligned}[t]
		\dfrac{d\hat{f}^{1,\kk}_{1j}}{dt} &= \ds{\sum_{\n=1}^{N_\kk}}\left\{-v_j\left((a\Phi^{2,\n-1,\kk}_{0,1}+(1-a)\Phi^{3,\n-1,\kk}_{0,1})f^{1,\n-1,\kk}_{1j}-
			\bar{\Phi}^1\bar{f}^{1,\n-1,\kk}_j\right)\right. \\
		&\phantom{=} \left.+G_j[\f^{1,\n-1,\kk}_1,\,\f^{1,\n-1,\kk}_1]-f^{1,\n-1,\kk}_{1j}L[\f^{1,\n-1,\kk}_1]\right\}\ind_{[t^\kk_{\n-1},\,t^\kk_\n]}(t)
	\end{aligned}
	$

\item $
	\begin{aligned}[t]
		\dfrac{d\hat{f}^{2,\kk}_{1j}}{dt} &= \ds{\sum_{\n=1}^{N_\kk}}\left\{v_j\left(\bar{\Phi}^2 f^{2,\n-1,\kk}_{1j}-a\Phi^{2,\n-1,\kk}_{0,1}f^{1,\n-1,\kk}_{1j}\deltac\right)\right. \\
			&\phantom{=} \left.+G_j[\f^{2,\n-1,\kk}_1,\,\f^{2,\n-1,\kk}_1]-f^{2,\n-1,\kk}_{1j}L[\f^{2,\n-1,\kk}_1]\right\}\ind_{[t^\kk_{\n-1},\,t^\kk_\n]}(t)
	\end{aligned}
	$

\item $
	\begin{aligned}[t]
		\dfrac{d\hat{f}^{3,\kk}_{1j}}{dt} &= \ds{\sum_{\n=1}^{N_\kk}}\left\{v_j\left(\bar{\Phi}^3 f^{3,\n-1,\kk}_{1j}-(1-a)\Phi^{3,\n-1,\kk}_{0,1}f^{1,\n-1,\kk}_{1j}\deltac\right)\right. \\
			&\phantom{=} \left.+G_j[\f^{3,\n-1,\kk}_1,\,\f^{3,\n-1,\kk}_1]-f^{3,\n-1,\kk}_{1j}L[\f^{3,\n-1,\kk}_1]\right\}\ind_{[t^\kk_{\n-1},\,t^\kk_\n]}(t)
	\end{aligned}
	$
\end{itemize}
where we have denoted $\Phi^{r,\n-1,\kk}_{0,1}:=\Phi(\rho^{r,\n-1,\kk}_0,\,\rho^{r,\n-1,\kk}_1)$.

Furthermore, continuing from Eq.~\eqref{eq:IVP-interpolante} and developing the other terms according to Eq.~\eqref{eq:interpolante} we discover:
\begin{itemize}
\item $
	\begin{aligned}[t]
		v_j &\left(a\Phi^{2,\kk}_{0,1}+(1-a)\Phi^{3,\kk}_{0,1}\right)\hat{f}^{1,\kk}_{1j} \\
		& =v_j\ds{\sum_{\n=1}^{N_\kk}}\Bigl\{\left(a\Phi^{2,\kk}_{0,1}+(1-a)\Phi^{3,\kk}_{0,1}\right)f^{1,\n-1,\kk}_{1j} \\
		&\phantom{=} +\dfrac{t-t^\kk_{\n-1}}{\Delta{t}_\kk}\left(a\Phi^{2,\kk}_{0,1}+(1-a)\Phi^{3,\kk}_{0,1}\right)
			\left(f^{1,\n,\kk}_{1j}-f^{1,\n-1,\kk}_{1j}\right)\Bigr\}\ind_{[t^\kk_{\n-1},\,t^\kk_{\n}]}(t)
	\end{aligned}
	$
	
\item $
	\begin{aligned}[t]
		v_j &\left(\bar{\Phi}^{2,\kk}\hat{f}^{2,\kk}_{1j}-a\Phi^{2,\kk}_{1j}\hat{f}^{1,\kk}_{1j}\deltac\right) \\
		& =v_j\ds{\sum_{\n=1}^{N_\kk}}\Bigl\{\left(\bar{\Phi}^{2,\kk}f^{2,\n-1,\kk}_{1j}-a\Phi^{2,\kk}_{0,1}f^{1,\n-1,\kk}_{1j}\deltac\right) \\
		&\phantom{=} +\dfrac{t-t^\kk_{\n-1}}{\Delta{t}_\kk}\left[\bar{\Phi}^{2,\kk}\left(f^{2,\n,\kk}_{1j}-f^{2,\n-1,\kk}_{1j}\right)
			-a\Phi^{2,\kk}_{0,1}\left(f^{1,\n,\kk}_{1j}-f^{1,\n-1,\kk}_{1j}\right)\deltac\right]\Bigl\}\ind_{[t^\kk_{\n-1},\,t^\kk_{\n}]}(t)
	\end{aligned}
	$
	
\item $
	\begin{aligned}[t]
		v_j &\left(\bar{\Phi}^{3,\kk}\hat{f}^{3,\kk}_{1j}-(1-a)\Phi^{3,\kk}_{0,1}\hat{f}^{1,\kk}_{1j}\deltac\right) \\
		& =v_j\ds{\sum_{\n=1}^{N_\kk}}\Bigl\{\left[\bar{\Phi}^{3,\kk} f^{3,\n-1,\kk}_{1j}-(1-a)\Phi^{3,\kk}_{0,1}f^{1,\n-1,\kk}_{1j}\deltac\right] \\
		&\phantom{=} +\dfrac{t-t^\kk_{\n-1}}{\Delta{t}_\kk}\left[\bar{\Phi}^{3,\kk}\left(f^{3,\n,\kk}_{1j}-f^{3,\n-1,\kk}_{1j}\right)
			-(1-a)\Phi^{3,\kk}_{0,1}\left(f^{1,\n,\kk}_{1j}-f^{1,\n-1,\kk}_{1j}\right)\deltac\right]\Bigl\}\ind_{[t^\kk_{\n-1},\,t^\kk_{\n}]}(t)
	\end{aligned}
	$
	
\item $
	\begin{aligned}[t]
		G_j[\hat{\f}^{r,\kk}_1,\,\hat{\f}^{r,\kk}_1] &= \eta_0\hat{\rho}^{r,\kk}_1\sum_{h,k=1}^{n}\sum_{\n=1}^{N_\kk}
			A^j_{hk}[\hat{\rho}^{r,\kk}]\Bigl\{f^{r,\n-1,\kk}_{1h}f^{r,\n-1,\kk}_{1k} \\
		& +\dfrac{t-t^\kk_{\n-1}}{\Delta{t}_\kk}f^{r,\n-1,\kk}_{1h}\left(f^{r,\n,\kk}_{1k}-f^{r,\n-1,\kk}_{1k}\right) \\
		& +\dfrac{t-t^\kk_{\n-1}}{\Delta{t}_\kk}f^{r,\n-1,\kk}_{1k}\left(f^{r,\n,\kk}_{1h}-f^{r,\n-1,\kk}_{1h}\right) \\
		& +\left(\dfrac{t-t^\kk_{\n-1}}{\Delta{t}_\kk}\right)^2\left(f^{r,\n,\kk}_{1h}-f^{r,\n-1,\kk}_{1h}\right)
			\left(f^{r,\n,\kk}_{1k}-f^{r,\n-1,\kk}_{1k}\right)\Bigr\}\ind_{[t^\kk_{\n-1},\,t^\kk_{\n}]}(t)
	\end{aligned}
	$

\item $
	\begin{aligned}[t]
		\hat{f}^{r,\kk}_{1j}L[\hat{\f}^{r,\kk}_1] &= \eta_0\hat{\rho}^{r,\kk}_1\sum_{k=1}^{n}\sum_{\n=1}^{N_\kk}\Bigl\{f^{r,\n-1,\kk}_{1j}f^{r,\n-1,\kk}_{1k} \\
			& +\dfrac{t-t^\kk_{\n-1}}{\Delta{t}_\kk}f^{r,\n-1,\kk}_{1j}\left(f^{r,\n,\kk}_{1k}-f^{r,\n-1,\kk}_{1k}\right) \\
			& +\dfrac{t-t^\kk_{\n-1}}{\Delta{t}_\kk}f^{r,\n-1,\kk}_{1k}\left(f^{r,\n,\kk}_{1j}-f^{r,\n-1,\kk}_{1j}\right) \\
			& +\left(\dfrac{t-t^\kk_{\n-1}}{\Delta t_\kk}\right)^2\left(f^{r,\n,\kk}_{1j}-f^{r,\n-1,\kk}_{1j}\right)
				\left(f^{r,\n,\kk}_{1k}-f^{r,\n-1,\kk}_{1k}\right)\Bigr\}\ind_{[t^\kk_{\n-1},\,t^\kk_{\n}]}(t).
	\end{aligned}
	$
\end{itemize}

Inserting the quantities just computed at the left-hand side of Eq.~\eqref{eq:IVP-interpolante} we get an expression of $e^{r,\kk}_{1j}$ in which terms featuring the difference $f^{r,\n,\kk}_{1j}-f^{r,\n-1,\kk}_{1j}$ appear along with others involving differences between pairs of flux limiters and tables of games evaluated at $\f^{r,\n-1,\kk}$ and $\hat{\f}^{r,\kk}$, respectively. Applying the following estimate, proved in~\cite{fermo2013SIAP} (cf. Lemma 5.4),
$$ \abs{f^{r,\n,\kk}_{1j}-f^{r,\n-1,\kk}_{1j}}\leq 2(1+\eta_0)\Delta{t}_\kk $$
to the first terms and invoking the Lipschitz continuity of the flux limiters and the table of games (owing to Assumptions~\ref{ip2}--\ref{ip4}) we finally obtain:
$$ \abs{e^{r,\kk}_{1j}(t)}\leq\C\Delta{t}_\kk, \quad \forall\,r=1,\,2,\,3 $$
for a suitable constant $\C>0$ independent of $\kk$, whence
$$ \int_0^t\abs{e^{r,\kk}_{1j}(s)}\,ds\leq\C\Delta{t}_\kk T\xrightarrow{\kk\to\infty}0, \quad \forall\,r=1,\,2,\,3, $$
which completes the proof.
\end{proof}

\subsubsection{Regularity}
We conclude the analysis of the 1-2 junction by showing that the unique mild solution found thus far is actually a classical one.

\begin{corollary}[Improved regularity for the solution of the 1-2 junction]
The mild solution to Problem~\eqref{eq:IVP_1-2} is of class $C^1$ in $[0,\,T]$.
\end{corollary}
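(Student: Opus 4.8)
The plan is a standard bootstrap from the mild (integral) formulation. By Definition~\ref{def:mild.solution_1-2} the solution $\F_1\in\B^3\subset X_T$ is already continuous, and each of its components satisfies an identity of the form
$$ f^r_{1j}(t)=\varphi^r_j+\int_0^t\Psi^r_j(s)\,ds,\qquad r=1,2,3, $$
where $\Psi^r_j(s)$ denotes the full integrand at the right-hand side of Eq.~\eqref{eq:IVPmild_1-2}. If I can show that $s\mapsto\Psi^r_j(s)$ is continuous on $[0,T]$, then the Fundamental Theorem of Calculus gives that $f^r_{1j}$ is differentiable with $\tfrac{d}{dt}f^r_{1j}(t)=\Psi^r_j(t)$; since $\Psi^r_j$ is itself continuous, the derivative is continuous, i.e. $f^r_{1j}\in C^1([0,T])$. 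Thus the whole claim reduces to verifying continuity of $\Psi^r_j$ term by term.

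First I would observe that each density $\rho^r_i(s)=\sum_{k}f^r_{ik}(s)$ is a finite sum of continuous functions, hence continuous; the transmission conditions~\eqref{eq:cond_1-2} express $\rho^2_0,\rho^3_0$ as partial sums of the continuous $f^1_{1k}$, so these are continuous too. Consequently every flux limiter $\Phi^r_{0,1}=\Phi(\rho^r_0,\rho^r_1)$ is continuous, being the composition of the Lipschitz map $\Phi$ (Assumption~\ref{ip2}(iii)) with continuous arguments, and the same holds for $\bar{\Phi}^1=\Phi(\bar\rho^1,\rho^1_1)$ by Assumption~\ref{ip3}, using that $\bar\rho^1=\sum_k\bar f^1_k$ is continuous because $\bar{\f}^1\in\B$. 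The gain and loss terms $G_j[\f^r_1,\f^r_1]$ and $f^r_{1j}L[\f^r_1]$ are polynomials in the continuous $f^r_{1h}$ multiplied by factors $A^j_{hk}[\rho^r_1]$, which are continuous since $A^j_{hk}$ is Lipschitz (Assumption~\ref{ip4}) and $\rho^r_1$ is continuous; hence they are continuous as well. The remaining ingredients $v_j$ and $\deltac$ are constants, so they pose no difficulty.

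The only genuinely external quantities left are the prescribed boundary data $\bar f^1_j$ and the downstream flux limiters $\bar\Phi^2,\bar\Phi^3$. Since $\bar{\f}^1\in\B$, the $\bar f^1_j$ are continuous, so I expect the only nontrivial hypothesis to be the continuity in time of $\bar\Phi^2,\bar\Phi^3$. These are the sole data not tied to the solution through a continuous map: if they are merely bounded and measurable the argument yields only Lipschitz (absolutely continuous) regularity of $\F_1$, not $C^1$. Under the natural assumption that $\bar\Phi^2,\bar\Phi^3\in C([0,T])$---in keeping with the form imposed on $\bar\Phi^1$ in Assumption~\ref{ip3}---every factor of $\Psi^r_j$ is continuous, hence $\Psi^r_j$ is continuous and the bootstrap above delivers $\F_1\in C^1([0,T];\R^{3n})$. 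The step I would flag as the crux is therefore not the estimates but the regularity demanded of the downstream boundary data; everything else is a routine check that the superposition (Nemytskii-type) operators built from $\Phi$, $A^j_{hk}$ and the polynomial gain/loss maps preserve continuity.
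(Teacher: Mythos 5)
Your proof is correct and follows essentially the same route as the paper's, which is a two-line bootstrap: the continuity of $\F_1$ established in the existence theorem, together with the Lipschitz assumptions on $\Phi$ and on the $A^j_{hk}$, makes the integrand of the mild formulation continuous, so each $f^r_{1j}$ is differentiable with continuous derivative. Your additional observation --- that the continuity in time of the prescribed downstream flux limiters $\bar{\Phi}^2,\,\bar{\Phi}^3$ is a genuine extra hypothesis, not literally contained in Assumptions 1--4 (unlike $\bar{\Phi}^1$, which inherits continuity through $\Phi$ and $\bar{\f}^1\in\B$) --- is a fair and worthwhile point that the paper's own proof passes over in silence.
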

\begin{proof}
The continuity of $\F_1$ in Theorem~\ref{theo:existence_1-2} together with Assumptions~\ref{ip2},~\ref{ip4} imply that both the second term at the left-hand side and the terms at the right-hand side of Eqs.~\eqref{eq:IVP_1-2} are continuous. Thus $\frac{df^{r}_{1j}}{dt}$ are continuous for every $j=1,\,\dots,\,n$ and every $r=1,\,2,\,3$.
\end{proof}

\subsection{Well-posedness of the 2-1 junction}
We now consider the model of the 2-1 junction described in Section~\ref{sect:2-1_junction}, still assuming for simplicity that all roads have $m_r=1$ cells. Moreover, to fix the ideas and without loss of generality, we also assume that the road with right-of-way is $r^\ast=1$. Thus the transmission conditions~\eqref{eq:f0j_2-1},~\eqref{eq:Phi_2-1_1},~\eqref{eq:Phi_2-1_2} become:
\begin{itemize}
\item if $q^1_1+q^2_1\leq p$ then
$$ f^3_{0j}=(f^1_{1j}+f^2_{1j})\deltac, \qquad \Phi^1_{1,2}=\Phi^3_{0,1}, \qquad \Phi^2_{1,2}=\Phi^3_{0,1}, $$
\item if $q^1_{1}+q^2_{1}>p$ then
$$ f^3_{0j}=f^1_{1j}\deltac, \qquad \Phi^1_{1,2}=\Phi^3_{0,1}, \qquad \Phi^2_{1,2}=0, $$
\end{itemize}
where $p\leq v_2$ is the flux threshold and $\deltac$ is the complementary of the Kronecker delta as introduced in the previous Section~\ref{sect:analysis_1-2}.

\begin{figure}[!t]
\centering
\includegraphics[width=0.5\textwidth]{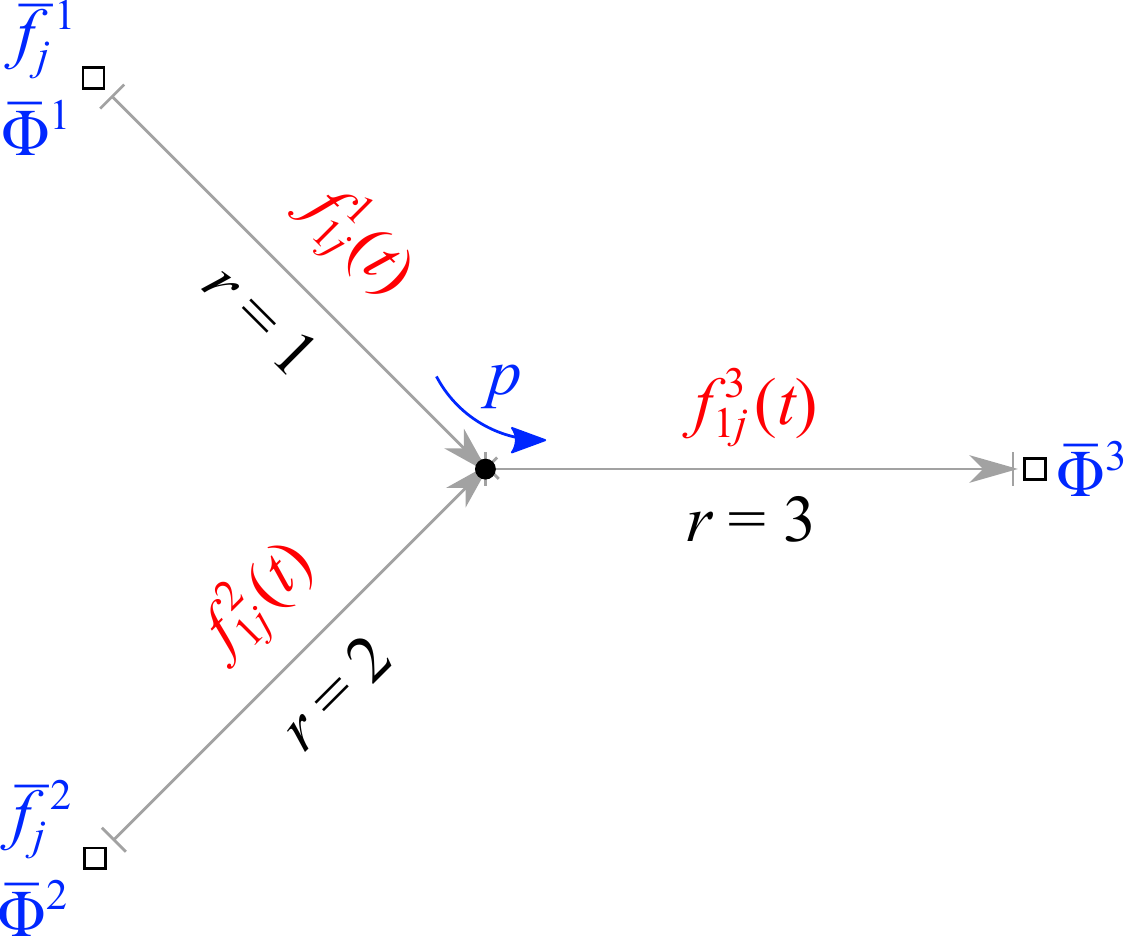}
\caption{Data (blue) and variables (red) for the problem of the 2-1 junction}
\label{fig:2-1_junction-analysis}
\end{figure}

Boundary conditions are like in Fig.~\ref{fig:2-1_junction-analysis}. In particular, at the beginning of the incoming roads we prescribe the distributions of cars and the flux limiters:
\begin{equation}
	\begin{cases}
		f^r_{0j}=\bar{f}^r_j, & j=1,\,\dots,\,n \\[1mm]
		\Phi^r_{0,1}=\bar{\Phi}^r,
	\end{cases}
	\qquad r=1,\,2.
	\label{eq:bc_2-1_1}
\end{equation}
Conversely, at the end of road $r=3$ we prescribe the flux limiter:
\begin{equation}
	\Phi^3_{1,2}=\bar{\Phi}^3.
	\label{eq:bc_2-1_2}
\end{equation}

Before tackling the qualitative analysis of this junction, a remark about the formulation of the model is in order. The transmission conditions above imply that the distribution function $f^3_{0j}$ can be discontinuous in time when passing from the case $q^1_1+q^2_1\leq p$ to the case $q^1_1+q^2_1>p$ (or vice versa). In fact, cars from road $r=2$ suddenly stop flowing into road $r=3$ as soon as the priority rule takes effect (and, analogously, they suddenly start flowing again when the application of the rule ceases). This is easily guessed to cause a lack of continuous dependence of the solution on the data. On the other hand, the chosen implementation of the right-of-way rule is not the only possible one: other smoother transitions from double to single flow across the junction can be devised, inspired by the same underlying physical ideas. In the next section we propose one such regularization of the 2-1 junction, which we will then use for our well-posedness theory.

\subsubsection{The regularized 2-1 junction}
Let $\epsilon>0$ be fixed and let $H_\epsilon:\R\to\R$ be a Lipschitz continuous function such that $H_\epsilon(x)=0$ if $x\leq 0$, $H_\epsilon(x)=1$ if $x\geq\epsilon$, and $0\leq H_\epsilon(x)\leq 1$ if $0<x<\epsilon$ (in practice, $H_\epsilon$ is a regularized version of the Heaviside function). Next let us define the following \emph{``regularized'' right-of-way rule}:
\begin{equation}
	q^3_0=q^1_1+H_\epsilon(p-(q^1_1+q^2_1))q^2_1.
	\label{eq:rightofway_reg}
\end{equation}

This definition eliminates the discontinuity intrinsic in rule~\eqref{eq:rightofway_1}-\eqref{eq:rightofway_2} thanks to the due regularity of the function $H_\epsilon$. In particular, Eq.~\eqref{eq:rightofway_reg} states that the incoming flux at the beginning of road $r=3$ switches smoothly from $q^3_0=q^1_1+q^2_1$, when $q^1_1+q^2_1\leq p-\epsilon$, to $q^3_0=q^1_1$, when $q^1_1+q^2_1\geq p$, passing through the intermediate state $q^3_0=q^1_1+H_\epsilon q^2_1$, when $p-\epsilon<q^1_1+q^2_1<p$, in which the contribution of road $r=2$ is progressively reduced. The steepness of the transition depends obviously on $\epsilon$.

Rule~\eqref{eq:rightofway_reg}, along with the usual parallelism with the 1-1 junction, implies that the distribution function at the beginning of road $r=3$ is
\begin{equation}
	f^3_{0j}=\left[f^1_{1j}+H_\epsilon(p-(q^1_1+q^2_1))f^2_{1j}\right]\deltac
	\label{eq:f0j_2-1_reg}
\end{equation}
whence, according to the mass conservation~\eqref{eq:masscons_2-1} at the junction, it also results:
\begin{equation}
	\begin{cases}
		\Phi^{1}_{1,2}=\Phi^3_{0,1} \\
		\Phi^{2}_{1,2}=H_\epsilon(p-(q^1_{1}+q^2_{1}))\Phi^3_{0,1}.
	\end{cases}
	\label{eq:Phi_2-1_reg}
\end{equation}

\begin{remark}
By letting $\epsilon\to 0^+$ the function $H_\epsilon$ formally converges to the Heaviside function. It is then immediate to see that Eqs.~\eqref{eq:rightofway_reg}--\eqref{eq:Phi_2-1_reg} reduce in that case to Eqs.~\eqref{eq:rightofway_1}--\eqref{eq:Phi_2-1_2} of the original model.
\end{remark}

\begin{remark}
Model~\eqref{eq:rightofway_reg}--\eqref{eq:Phi_2-1_reg} is, in general, not a conservative one for the free flux at the junction, in fact $q^3_0\ne q^1_1+q^2_1$ when $p-\epsilon<q^1_1+q^2_1<p$. However, as Eq.~\eqref{eq:masscons_2-1} demonstrates, continuity of the free flux is \emph{not} necessary for mass conservation purposes, the only important one being the continuity of the constrained flux characterizing the transport term of the kinetic equation~\eqref{eq:single.road.eq}.
\end{remark}

Concerning the threshold $p$, we mention that the same argument proposed in Section~\ref{sect:2-1_junction} shows that $p\leq v_2$ produces an admissible density at the beginning of road $r=3$ (that is, $\rho^3_0\leq 1$).

\subsubsection{Well-posedness of the regularized 2-1 junction}
Plugging conditions~\eqref{eq:f0j_2-1_reg},~\eqref{eq:Phi_2-1_reg} into Eq.~\eqref{eq:single.road.eq} and taking the boundary conditions~\eqref{eq:bc_2-1_1},~\eqref{eq:bc_2-1_2} into account we obtain the following problem for the 2-1 junction:
\begin{equation}
	\begin{cases}
		\dfrac{df^1_{1j}}{dt}+v_j(\bar{\Phi}^1\bar{f}^1_{j}-\Phi^3_{0,1}f^1_{1j})=G_j[\f^1_1,\,\f^1_1]-f^1_{1j}L[\f^1_1] \\[3mm]
		\begin{aligned}[t]
			\dfrac{df^2_{1j}}{dt}+v_j(\bar{\Phi}^2\bar{f}^2_{j}-H_\epsilon(p-(q^1_1+q^2_1))\Phi^3_{0,1}f^2_{1j}&) \\
				& =G_j[\f^2_1,\,\f^2_1]-f^2_{1j}L[\f^2_1] \\[3mm]
			\dfrac{df^3_{1j}}{dt}+v_j\bigl(\Phi^3_{0,1}(f^1_{1j}+H_{\epsilon}(p-(q^1_1+q^2_1))f^2_{1j}) &\deltac -\bar{\Phi}^3f^3_{1j})\bigr) \\
				& =G_j[\f^3_1,\,\f^3_1]-f^3_{1j}L[\f^3_1]
		\end{aligned} \\[3mm]
		f^r_{1j}(0)=\varphi^r_j \quad (r=1,\,2,\,3)
	\end{cases}
	\label{eq:IVP_2-1}
\end{equation}
which in mild formulation reads:
\begin{equation}
	\begin{cases}
		f^1_{1j}(t)=\varphi^1_j+\ds{\int_0^t}\left\{v_j(\bar{\Phi}^1\bar{f}^1_j(s)-\Phi^3_{0,1}f^1_{1j}(s))+G_j[\f^1_1,\,\f^1_1]-f^1_{1j}(s)L[\f^1_1]\right\}\,ds \\[3mm]
		\begin{aligned}[t]
			f^2_{1j}(t)=\varphi^2_j+\int_0^t\bigl\{&v_j\left(\bar{\Phi}^2\bar{f}^2_j-H_\epsilon(p-(q^1_1+q^2_1))\Phi^3_{0,1}f^2_{1j}(s)\right) \\
				& +G_j[\f^2_1,\,\f^2_1]-f^2_{1j}(s)L[\f^2_1]\bigr\}\,ds \\
			f^3_{1j}(t)=\varphi^3_j+\int_0^t\bigl\{&v_j\left(\Phi^3_{0,1}(f^1_{1j}(s)+H_{\epsilon}(p-(q^1_1+q^2_1))f^2_{1j}(s))\deltac-\bar{\Phi}^3f^3_{1j}(s)\right) \\
				& +G_j[\f^3_1,\,\f^3_1]-f^3_{1j}(s)L[\f^3_1]\bigr\}\,ds. \\
		\end{aligned}
	\end{cases}
	\label{eq:IVPmild_2-1}
\end{equation}

Now we are in a position to state the main results concerning the mild kinetic solution across the 2-1 junction. For the sake of definiteness, in the following we recall the definition of mild solution along with the assumptions that bring to our results.

\begin{definition}
A distribution function
$$ \F_1(t)=(\f^1_1(t),\,\f^2_1(t),\,\f^3_1(t)):[0,\,T]\to\Rn\times\Rn\times\Rn, $$
where $\f^r_1=\{f^r_{1j}\}_{j=1}^{n}$, $r=1,\,2,\,3$, is said to be a \emph{mild kinetic solution across the 2-1 junction} if $\F_1\in\B^3$ and $\F_1$ satisfies Eq.~\eqref{eq:IVPmild_2-1}.
\label{def:mild.solution_2-1}
\end{definition}

\begin{framed}
\centering\textbf{Assumptions for Problem~\eqref{eq:IVP_2-1}}
\begin{enumerate}
\item Initial and boundary data are such that $\bvarphi^r,\,\bar{\f}^1,\,\bar{\f}^2\in\B$, $r=1,\,2,\,3$.
\item The function $\Phi:[0,\,1]^2\to [0,\,1]$ defining the flux limiter $\Phi^3_{0,1}=\Phi(\rho^3_0,\,\rho^3_1)$ is like in Assumption~\ref{ip2} of Problem~\eqref{eq:IVP_1-2}, cf. page~\pageref{ip2}.
\item The flux limiters $\bar{\Phi}^1$, $\bar{\Phi}^2$ (boundary data) at the entrance of the incoming roads are like in Assumption~\ref{ip2} of Problem~\eqref{eq:IVP_1-2}, cf. page~\pageref{ip2}. In particular, for $r=1,\,2$, we have $\bar{\Phi}^r=\Phi(\bar{\rho}^r,\,\rho^r_1)$, where $\bar{\rho}^r=\sum_{j=1}^{n}\bar{f}^r_{j}$ is the density of cars accessing road $r$ according to the boundary datum $\bar{\f}^r$.
\item The elements $A^j_{hk}$ of the table of games are like in Assumption~\ref{ip4} of Problem~\eqref{eq:IVP_1-2}, cf. page~\pageref{ip4}.
\end{enumerate}	
\end{framed}

Similarly to the case of the 1-2 junction, we state three results. The first one gives a continuous dependence estimate for the mild solution of Problem~\eqref{eq:IVP_2-1}, whence in particular its uniqueness; the second one gives its existence; finally, the third one shows that the mild solution thus found is actually a classical (i.e., differentiable in time) one. We omit to detail the proofs since they are obtained by means of the very same procedures employed in the case of the 1-2 junction, up to taking into account now the suitable smoothness properties of the function $H_\epsilon$.

\begin{theorem}[Uniqueness and continuous dependence for the (regularized) 2-1 junction]
Let $\{\{\bvarphi^r\}_{r=1}^{3},\,\{\bar{\f}^r\}_{r=1}^{2},\,\{\bar{\Phi}^r_f\}_{r=1}^{3}\}$, $\{\{\bphi^r\}_{r=1}^{3},\,\{\bar{\g}^r\}_{r=1}^{2},\,\{\bar{\Phi}^r_g\}_{r=1}^3\}$ be two sets of initial and boundary data for the 2-1 junction and $\F_1,\,\G_1\in\B^3$ two corresponding mild solutions of Problem~\eqref{eq:IVP_2-1}. Then there exists $\C>0$ such that
\begin{multline*}
	\norm{\G_1-\F_1}_\infty\leq\C\left[\sum_{r=1}^{3}\norm{\bphi^r-\bvarphi^r}_1\right. \\
		\left.+\int_0^T\left(\sum_{r=1}^{3}\abs{\bar{\Phi}^r_g(t)-\bar{\Phi}^r_f(t)}+\sum_{r=1}^{2}\norm{\bar{\g}^r(t)-\bar{\f}^r(t)}_1\right)dt\right].
\end{multline*}
In particular, there is at most one solution corresponding to a given set of initial and boundary data.
\end{theorem}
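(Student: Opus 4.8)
The plan is to follow step by step the argument of Theorem~\ref{theo:uniqueness_1-2}, the only genuinely new ingredient being the treatment of the terms weighted by $H_\epsilon$. First I would subtract term by term the mild equations~\eqref{eq:IVPmild_2-1} satisfied by $\F_1$ and $\G_1$, then sum over $j=1,\,\dots,\,n$ and use $v_j,\,\deltac\leq 1$ together with $0\leq H_\epsilon\leq 1$. This produces three integral inequalities of the same shape as~\eqref{eq:FGL_1-2}: each $\norm{\g^r_1(t)-\f^r_1(t)}_1$ is bounded by the corresponding initial-datum term $\norm{\bphi^r-\bvarphi^r}_1$, a boundary-flux term, and coupling terms of type $\scrF^r$, $\scrG^r$, $\scrL^r$.

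The boundary-flux terms and the gain/loss terms $\scrG^r$, $\scrL^r$ are handled exactly as in the 1-2 case: the products $\abs{\bar{\Phi}^r_g\bar{g}^r_j-\bar{\Phi}^r_f\bar{f}^r_j}$ split by the triangle inequality into $\abs{\bar{\Phi}^r_g-\bar{\Phi}^r_f}+\abs{\bar{g}^r_j-\bar{f}^r_j}$ using boundedness, while the operator differences are controlled by $\C\norm{\g^r_1(s)-\f^r_1(s)}_1$ owing to the estimate recalled from~\cite{fermo2013SIAP} and the Lipschitz continuity of the table of games (Assumption~\ref{ip4}). The flux-limiter terms $\scrF^r$ carrying $\Phi^3_{0,1}=\Phi(\rho^3_0,\,\rho^3_1)$ are estimated through the Lipschitz continuity of $\Phi$ (Assumption~\ref{ip2}), and the densities $\rho^3_0$, $\varrho^3_0$ at the entrance of road $r=3$ are expressed, via the transmission condition~\eqref{eq:f0j_2-1_reg}, in terms of the incoming distributions on roads $r=1,\,2$, so that $\abs{\rho^3_0-\varrho^3_0}$ is again dominated by $\norm{\g^1_1-\f^1_1}_1+\norm{\g^2_1-\f^2_1}_1$.

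The main obstacle, and the place where the regularization pays off, is the term involving $H_\epsilon(p-(q^1_1+q^2_1))$ appearing in the equations for $f^2_{1j}$ and $f^3_{1j}$. Here I would add and subtract intermediate quantities so as to separate three contributions: the difference of the two $H_\epsilon$ factors, the difference of $\Phi^3_{0,1}$, and the difference of the distribution functions themselves. The delicate one is the first: the two $H_\epsilon$ arguments differ only through the total incoming flux $q^1_1+q^2_1$ evaluated along $\F_1$ and along $\G_1$, and since $H_\epsilon$ is Lipschitz continuous while $q^r_1=\sum_{j=1}^{n}v_jf^r_{1j}$ with $v_j\leq 1$, this difference is bounded by
$$ \Lip(H_\epsilon)\sum_{r=1}^{2}\norm{\g^r_1(s)-\f^r_1(s)}_1. $$
Multiplied by the bounded factors $\Phi^3_{0,1}$ and $f^2_{1j}$ (respectively $f^1_{1j}$, $f^2_{1j}$), this contribution is still dominated by $\C\norm{\G_1(s)-\F_1(s)}_1$. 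It is precisely the finiteness of $\Lip(H_\epsilon)$ that makes this estimate work and that would fail for the sharp Heaviside rule, consistently with the lack of continuous dependence anticipated before the regularization was introduced. Collecting all estimates and summing the three inequalities term by term yields
$$ \norm{\G_1(t)-\F_1(t)}_1\leq\sum_{r=1}^{3}\norm{\bphi^r-\bvarphi^r}_1+\C\int_0^t\Bigl(\norm{\G_1(s)-\F_1(s)}_1+\sum_{r=1}^{3}\abs{\bar{\Phi}^r_g-\bar{\Phi}^r_f}+\sum_{r=1}^{2}\norm{\bar{\g}^r-\bar{\f}^r}_1\Bigr)\,ds, $$
and an application of Gronwall's inequality followed by taking the supremum over $t\in[0,\,T]$ gives the asserted continuous-dependence estimate. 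Uniqueness then follows by taking identical data $\bvarphi^r=\bphi^r$, $\bar{\Phi}^r_f=\bar{\Phi}^r_g$, $\bar{\f}^r=\bar{\g}^r$.
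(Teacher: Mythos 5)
Your proposal is correct and follows exactly the route the paper intends: the authors explicitly omit this proof, stating it is obtained ``by means of the very same procedures employed in the case of the 1-2 junction, up to taking into account now the suitable smoothness properties of the function $H_\epsilon$,'' and your treatment of the $H_\epsilon$ terms via its Lipschitz constant and the bound $\abs{q^r_1[\g]-q^r_1[\f]}\leq\norm{\g^r_1-\f^r_1}_1$ is precisely that missing ingredient. Nothing further is needed.
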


\begin{theorem}[Existence for the (regularized) 2-1 junction]
There exists $\F_1\in\B^3$ which is a mild solution to Problem~\eqref{eq:IVP_2-1} in the sense of Definition~\ref{def:mild.solution_2-1}.
\end{theorem}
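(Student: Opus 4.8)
The plan is to reproduce, essentially verbatim, the three-step construction of Theorem~\ref{theo:existence_1-2}, the only genuinely new ingredient being the Lipschitz factor $H_\epsilon(p-(q^1_1+q^2_1))$ appearing in the second and third equations of Problem~\eqref{eq:IVP_2-1}. First I would discretize in time on the grid $t^\kk_\n=\n\Delta{t}_\kk$ by the explicit Euler scheme, exactly as in~\eqref{eq:discrete.time_1-2} but built from the (physically correct) mild form~\eqref{eq:IVPmild_2-1}, and show by induction that the iterates $\f^{r,\n,\kk}_1$ stay in $\B$ for $\Delta{t}_\kk$ small. Non-negativity follows as before, since the loss term is again controlled by $L[\f^{r,\n,\kk}_1]\leq\eta_0$ and the factor $0\leq H_\epsilon\leq 1$ affects neither the sign of the inflow terms nor this bound. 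Likewise, the estimate $\sum_{j=1}^n f^{r,\n+1,\kk}_{1j}\leq 1$ is unchanged because it rests only on the conservativeness~\eqref{eq:conservativeness} of the gain/loss operators together with the inflow bounds discussed below.

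The one point requiring genuine care is the boundedness $f^{3,\n+1,\kk}_{1j}\leq 1$ on the outgoing road, where the transmission condition~\eqref{eq:f0j_2-1_reg} now couples \emph{both} incoming roads through $H_\epsilon$. The key is to show that the entrance density $\rho^3_0=\sum_{j=2}^{n}(f^1_{1j}+H_\epsilon f^2_{1j})$ (at the current iterate) stays below $1$. When $q^1_1+q^2_1>p$ one has $H_\epsilon=0$, so $\rho^3_0=\sum_{j\geq 2}f^1_{1j}\leq\rho^1_1\leq 1$; when $q^1_1+q^2_1\leq p$ one estimates, as in Section~\ref{sect:2-1_junction} and using $0<v_2\leq v_j$ for $j\geq 2$ together with $0\leq H_\epsilon\leq 1$,
$$ \rho^3_0\leq\frac{q^1_1+H_\epsilon q^2_1}{v_2}\leq\frac{q^1_1+q^2_1}{v_2}\leq\frac{p}{v_2}\leq 1, $$
the last inequality being exactly the choice $p\leq v_2$. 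With $\rho^3_0\leq 1$ in hand, Assumption~\ref{ip2}(ii) (or $\Phi\leq 1$ when $\rho^3_0+\rho^3_1\leq 1$) yields $\Phi^3_{0,1}(f^1_{1j}+H_\epsilon f^2_{1j})\deltac\leq\Phi^3_{0,1}\rho^3_0\leq 1-\rho^3_1\leq 1-f^3_{1j}$, the analogue of~\eqref{eq:bound2-1},~\eqref{eq:bound2-2}. For the incoming roads $r=1,\,2$ the inflow terms are the boundary data $\bar{\Phi}^r\bar{f}^r_j$ with $\bar{\Phi}^r=\Phi(\bar{\rho}^r,\,\rho^r_1)$, so the bound $\bar{\Phi}^r\bar{f}^r_j\leq 1-f^r_{1j}$ follows directly, exactly as in~\eqref{eq:bound1}. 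Combining these gives $f^{r,\n+1,\kk}_{1j}\leq 1$ for all $r$ once $\Delta{t}_\kk<1/(1+2\eta_0)$.

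Steps 2 and 3 then carry over with only cosmetic changes. I would interpolate the iterates piecewise-linearly as in~\eqref{eq:interpolante}, invoke the increment bound $\abs{f^{r,\n,\kk}_{1j}-f^{r,\n-1,\kk}_{1j}}\leq 2(1+\eta_0)\Delta{t}_\kk$ (still valid since all terms, including $H_\epsilon\leq 1$, are bounded) to get equicontinuity, and extract via Ascoli--Arzel\`{a} a limit $\f^r_1\in\B$ satisfying~\eqref{eq:limit_k}. To identify this limit as a mild solution of~\eqref{eq:IVPmild_2-1}, I would plug the interpolants into the mild equations, differentiate in $t$ (legitimate a.e.\ by Rademacher), and bound the resulting remainders $e^{r,\kk}_{1j}$ by $\C\Delta{t}_\kk$. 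The crucial observation is that the new factor $H_\epsilon(p-(q^1_1+q^2_1))$ is Lipschitz as a function of $\F_1$: indeed $H_\epsilon$ is Lipschitz by hypothesis and $q^r_1=\sum_j v_jf^r_{1j}$ depends linearly, hence Lipschitz-continuously, on the distribution functions, so the composition slots into exactly the same Lipschitz bookkeeping that~\eqref{eq:IVP_1-2} used for $\Phi$ and the table of games. Convergence of the left-hand sides to the expressions evaluated at $\F_1$ then follows, as in the 1-2 case, from~\eqref{eq:limit_k} and the stability estimates underlying the continuous-dependence theorem just stated.

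The main obstacle is therefore concentrated in Step 1, specifically in the admissibility bound $\rho^3_0\leq 1$ for the outgoing road: one must check that interpolating between the double-flow and single-flow regimes through $H_\epsilon$ never produces an inadmissible entrance density. This is precisely where the threshold choice $p\leq v_2$ enters, and where the monotonicity $0\leq H_\epsilon\leq 1$ guarantees that the regularized inflow never exceeds the un-regularized one. Everything else is a routine transcription of the argument for Theorem~\ref{theo:existence_1-2}.
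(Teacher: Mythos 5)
Your proposal is correct and follows exactly the route the paper intends: the authors omit this proof, stating it is obtained by the same three-step procedure as Theorem~\ref{theo:existence_1-2} up to the smoothness of $H_\epsilon$, and your write-up supplies precisely the two points that actually need checking (the admissibility bound $\rho^3_0\leq 1$ via $p\leq v_2$ and $0\leq H_\epsilon\leq 1$, and the Lipschitz continuity of $H_\epsilon(p-(q^1_1+q^2_1))$ as a function of the distribution functions). No gaps.
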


\begin{corollary}[Improved regularity for the solution of the (regularized) 2-1 junction]
The mild solution to Problem~\eqref{eq:IVP_2-1} is of class $C^1$ in $[0,\,T]$.
\end{corollary}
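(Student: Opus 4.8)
The plan is to follow, almost verbatim, the argument used for the corresponding corollary of the 1-2 junction: read $\frac{df^r_{1j}}{dt}$ directly off the differential form~\eqref{eq:IVP_2-1} and check that the resulting expression is a continuous function of $t$. Since by the preceding existence theorem the mild solution $\F_1$ belongs to $\B^3\subset C([0,\,T];\,\Rn)^3$, each component $f^r_{1j}$ is continuous on $[0,\,T]$; hence so are the finite linear combinations $\rho^r_1=\sum_{j}f^r_{1j}$ and $q^r_1=\sum_j v_jf^r_{1j}$ for $r=1,\,2,\,3$. The whole proof then reduces to verifying that every $t$-dependent coefficient entering~\eqref{eq:IVP_2-1} inherits this continuity.

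First I would treat the two nonlinear coefficients appearing in the transport terms. The flux limiter $\Phi^3_{0,1}=\Phi(\rho^3_0,\,\rho^3_1)$ is continuous because $\Phi$ is Lipschitz by Assumption~\ref{ip2}(iii) and its arguments $\rho^3_0$, $\rho^3_1$ are continuous (the former through the transmission condition~\eqref{eq:f0j_2-1_reg}). The genuinely new ingredient with respect to the 1-2 case is the switching factor $H_\epsilon(p-(q^1_1+q^2_1))$: here I would invoke the Lipschitz continuity of $H_\epsilon$ together with the continuity of $q^1_1+q^2_1$ to conclude that this factor, too, is a continuous function of $t$. The gain and loss terms $G_j[\f^r_1,\,\f^r_1]$ and $f^r_{1j}L[\f^r_1]$ are continuous as well, being polynomial in the continuous $f^r_{1j}$'s with coefficients $A^j_{hk}[\rho^r_1]$ that depend continuously on $\rho^r_1$ by the Lipschitz assumption on the table of games (Assumption~\ref{ip4}).

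Collecting these facts, every term on the right-hand side of~\eqref{eq:IVP_2-1}, as well as the transport term on the left, is continuous in $t$; therefore $\frac{df^r_{1j}}{dt}$ equals a continuous function for each $j$ and each $r=1,\,2,\,3$, which yields $\F_1\in C^1([0,\,T])$. The only step requiring genuine care---and the single reason the statement is phrased for the \emph{regularized} junction---is the continuity of the Heaviside-type factor: for the original, unregularized rule~\eqref{eq:rightofway_1}--\eqref{eq:rightofway_2} this factor jumps precisely at the instants when $q^1_1+q^2_1$ crosses the threshold $p$, so the coefficient of the transport term would be discontinuous and not only $C^1$ regularity but even the continuous dependence underpinning uniqueness would break down. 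Replacing the jump by the Lipschitz ramp $H_\epsilon$ is exactly what removes this obstacle, so that the argument of the 1-2 corollary applies unchanged.
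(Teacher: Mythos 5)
Your proof is correct and follows essentially the same route as the paper: the paper itself reduces this corollary to the argument given for the 1-2 junction (continuity of the mild solution plus the Lipschitz assumptions on $\Phi$ and on the table of games make every term of the equation continuous, hence the time derivative is continuous), ``up to taking into account now the suitable smoothness properties of the function $H_\epsilon$'' --- which is precisely the one new point you single out and handle. The only cosmetic refinement would be to phrase the conclusion via the mild form: the integrand in Eq.~\eqref{eq:IVPmild_2-1} is continuous, so the fundamental theorem of calculus gives $C^1$ regularity, rather than reading the derivative off Eq.~\eqref{eq:IVP_2-1} directly; but the paper argues in the same informal way, so this is not a gap.
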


\bibliographystyle{plain}
\bibliography{FlTa-networks}
\end{document}